\theoremstyle{thmstyleone}%
\newtheorem{theorem}{Theorem}
\newtheorem{proposition}[theorem]{Proposition}%
\theoremstyle{thmstyletwo}%
\newtheorem{remark}{Remark}%
\theoremstyle{thmstylethree}%
\begin{document}

\title[The emergence of a virus variant]{The emergence of a virus variant: dynamics of a competition model with cross-immunity time-delay validated by wastewater surveillance data for COVID-19}





\author[1]{\fnm{Bruce} \sur{Pell}}

\author[2]{\fnm{Samantha} \sur{Brozak}}

\author[3]{\fnm{Tin} \sur{ Phan}}

\author[4]{\fnm{Fuqing} \sur{Wu}}

\author[2]{\fnm{Yang} \sur{Kuang}}

\affil[1]{\orgdiv{Mathematics and Computer Science}, \orgname{Lawrence Technological University}, \orgaddress{\street{21000 W. 10 Mile Rd}, \city{Southfield}, \postcode{48075}, \state{MI}, \country{USA}}}

\affil[2]{\orgdiv{School of Mathematical and Statistical Sciences}, \orgname{Arizona State University}, \orgaddress{\street{901 S. Palm Walk}, \city{Tempe}, \postcode{85287 - 1804}, \state{AZ}, \country{USA}}}

\affil[3]{\orgdiv{Theoretical Biology and Biophysics Group}, \orgname{ Los Alamos National Laboratory}, \orgaddress{ \city{Los Alamos}, \postcode{87545}, \state{NM}, \country{USA}}}

\affil[4]{\orgdiv{The University of Texas Health Science Center at Houston, School of Public Health}, \orgname{University of Texas Houston}, \orgaddress{\city{Houston}, \postcode{77030}, \state{TX}, \country{USA}}}


\abstract{ We consider the dynamics of a virus spreading through a population that produces a mutant strain with the ability to infect individuals that were infected with the established strain. Temporary cross-immunity is included using a time delay, but is found to be a harmless delay. We provide some sufficient conditions that guarantee local and global asymptotic stability of the disease-free equilibrium and the two boundary equilibria when the two strains outcompete one another. It is shown that, due to the immune evasion of the emerging strain, the reproduction number of the emerging strain must be significantly lower than that of the established strain for the local stability of the established-strain-only boundary equilibrium. To analyze the unique coexistence equilibrium we apply a quasi steady-state argument to reduce the full model to a two-dimensional one that exhibits a global asymptotically stable established-strain-only equilibrium or global asymptotically stable coexistence equilibrium. Our results indicate that the basic reproduction numbers of both strains govern the overall dynamics, but in nontrivial ways due to the inclusion of cross-immunity. The model is applied to study the emergence of the SARS-CoV-2 Delta variant in the presence of the Alpha variant using wastewater surveillance data from the Deer Island Treatment Plant in Massachusetts, USA.

}

\keywords{harmless delay, delay differential equation, COVID-19, wastewater, competitive exclusion}



\maketitle

\section{Introduction}

Viruses mutate rapidly, which may impact the clinical presentation of the disease, its epidemiology, the efficacy of therapeutics and vaccinations, or the accuracy of diagnostic tools \citep{who_variants_2022}. These mutations, along with selection pressures, may result in new variants (or strains) of a pathogen. After the emergence of SARS-CoV-2 in late 2019 \citep{who_2019}, for about 11 months, SARS-CoV-2 genomes experienced a period of relative evolutionary stasis. From late 2020, however, multiple countries began reporting the detection of SARS-CoV-2 variants that seemed to be more efficient at spreading. One of the first variants, reported on December 14, 2020 in the United Kingdom, was identified as the B.1.1.7 variant (later renamed the ``Alpha" variant). Others include the B.1.351 lineage first detected in South Africa and P.1 from four Brazilian travelers at the Haneda (Tokyo) airport \citep{who_variants_2022,niid_2021}. Since then, the World Health Organization has defined five lineages as variants of concern (Alpha, Beta, Gamma, Delta, and Omicron) \citep{who_variants_2022}. These SARS-CoV-2 variants possess sets of mutations that confer increased transmissibility and/or altered antigenicity, which the latter likely evolved in response to the immune profile of the human population having changed from naive to having been immune-imprinted from prior infections. Multiple studies have reported the rapid displacement of the Delta variant by Omicron in both clinically reported data and wastewater surveillance data \citep{lee_2022,wu_2020}. The most recent Omicron BA.4 and BA.5 lineages have also been demonstrated to resist neutralization by full-dose vaccine serum and have reduced neutralization to BA.1 infections \citep{tuekprakhon_2022}. 

COVID-19 is now one of the most widely-monitored diseases in human history, allowing for unprecedented insight into variant emergence and competition. While disease surveillance often relies on clinical case data for monitoring (and genetic sequencing to identify new variants), issues related to reporting delays or the under-reporting of cases can lead to inaccurate real-time data. Wastewater surveillance was previously used to detect poliovirus \citep{poyry_1988}, enteroviruses \citep{gantzer_1998}, and illicit drug use \citep{daughton_2001}; however, it was recently that it came to the forefront by helping fight against the COVID-19 pandemic. The rationale for SARS-CoV-2 detection in wastewater relies on the viral shedding mostly in feces and urine from infected individuals, which gives an alternative approach to recognizing viral presence and penetration in the community \citep{peccia_2020,medema_2020,ahmed_2020,fall_2022}. Quantification of viral concentrations in wastewater thus offers a complementary approach to understanding disease prevalence and predicting viral transmission by integrating with epidemiological modeling, while avoiding the same pitfalls associated with only considering clinical data.

Mathematical models have been used extensively in the study of disease dynamics with applications to the COVID-19 pandemic. Wastewater-based surveillance has increasingly been used in conjunction with mathematical and statistical models. \cite{mcmahan_2021} used an SEIR model to mechanistically relate COVID cases and wastewater data. \cite{phan2023simple} also used a standard SEIR framework, with the addition of a viral compartment, to estimate the prevalence of COVID-19 using wastewater data; results indicated that true prevalence was approximately 8.6 times higher than reported cases, consistent with previous studies (see \cite{phan2023simple} and the references therein). Naturally, the SEIR model may be extended to include heterogeneity in the viral shedding other compartments (such as those hospitalized or asymptomatic) as done by \cite{nourbakhsh_2022}.

Other studies have focused on variant emergence and competition between multiple strains or diseases. A recent study by \cite{miller_2022} used a stochastic agent-based model in an attempt to forecast the emergence of SARS-CoV-2 variants without having to previously identify a variant. The authors found that mutations are proportional to the number of transmission events and the the fitness gradient of a strain may provide insight on its persistence \citep{miller_2022}. \cite{fudolig_2020} presented a modified SIR model with vaccination (in the form of a system of ordinary differential equations) to investigate two-strain dynamics and its local stability properties. Here, individuals infected with the established strain are immediately susceptible to infection by a new strain. The authors determined that the two strains can coexist if the reproduction number of the emerging strain is lower than that of the established strain \citep{fudolig_2020}. A general multi-strain model by Arruda and colleagues \citep{arruda_2021} uses an SEIR-type model for each viral strain and uses an optimal control approach. The authors account for mitigation strategies through the inclusion of a modification terms that can reduce the contact rate of each strain, and individuals infected with a strain will have waning immunity to that same strain. However, the model does not consider cross-immunity between the strains \citep{arruda_2021}. \cite{gonzalez-parra_2021} developed a two-strain model of COVID-19 by extending the standard SEIR formulation to include asymptomatic transmission and hospitalization. The study found that the introduction of a slightly more transmissible strain can become dominant in the population \citep{gonzalez-parra_2021}. These models may also include a time delay to account for various biological phenomena. For example, \cite{rihan_2020} developed a delayed stochastic SIR model with cross-immunity, where a time delay was incorporated to adjust for the incubation period of a disease and stochasticity was used to determine the effect of randomness on parameters.

In this paper, we present a four-dimensional modified SIR model to study disease dynamics when two strains are circulating in a population. A time delay is incorporated to account for temporary cross-immunity induced by infection with an established (or dominant) strain.  This paper is organized as follows: in \autoref{sec:model}, the model is formulated and the equilibria of the full system are analyzed. Interestingly, we find that the time delay does not influence the stability of equilibria and is hence a harmless delay \citep{Gopalsamy_1983,Driver_72}. In \autoref{sec:tran_model}, we introduce the transient model to study global stability of the coexistence equilibrium, and bifurcation curves are shown. Finally, the model is calibrated using wastewater data and the results are studied using a sensitivity analysis in \autoref{sec:num}.

\section{The general model}\label{sec:model}

In this section, we introduced our mathematical model that incorporates two competing virus strains and conduct basic model analysis. 

We consider a population-level virus competition model using a compartmental framework. We let $S(t)$, $I_1(t)$, $I_2(t)$ and $R_l(t)$ be the individuals that are susceptible to both virus strains, infectious with strain 1, infectious with strain 2 and recovered from strain 1 but susceptible to strain 2 at time $t$, respectively. Let $\tau$ be the time it takes for an individual infected with strain 1 to become susceptible to infection by strain 2. We introduce the following \textit{two-strain virus competition model with temporary cross-immunity}:
\begin{equation}
\begin{aligned}
\frac{dS}{dt} & =  a - dS -\beta_1 S I_1 - \beta_2 S I_2,  & \\
\frac{dI_1}{dt} & = \beta_1SI_1 -\gamma_1 I_1 -d_1I_1, & \\
\frac{dI_2}{dt} & =  \beta_2 S I_2 +\beta_2 R_l I_2 -\gamma_2 I_2-d_2I_2,  &\\
\frac{dR_l}{dt} & = \gamma_1 I_1(t-\tau) - \beta_2 R_l I_2 -dR_l.
\label{eqn: model1}
\end{aligned}
\end{equation}
An ODE version of this model without demography, independently developed, was used recently to describe the evolutionary dynamics of SARS-CoV-2 on the population level \cite{boyle2022selective}. As a practical convention, all parameters in our model are positive. The birth rate of susceptible individuals is constant at rate $a$. Susceptible individuals die naturally at rate $d$. Infected individuals with strain 1 or strain 2 die at rate $d_1$ or $d_2$, respectively. To investigate how disease-induced death influences virus strain competition, we make the distinction that $d_1$ and $d_2$ are disease-induced death rates, while $d$ is the natural death rate. In practice, $d_1\geq d$ and $d_2\geq d$. In system \eqref{eqn: model1}, susceptible individuals become infectious when they come into contact with infectious individuals from either strain at rates $\beta_1$ and $\beta_2$, respectively. Individuals infected with strain 1 recover at rate $\gamma_1$ and enter the $R_l$ compartment where they are immune to strain 1, but become susceptible to strain 2 at rate $\beta_2$ after $\tau$ days has passed. Infectious individuals with strain 2, recover at rate $\gamma_2$. We note that
\begin{equation}
R_l(t) = \int_0^{t-\tau} \gamma_1 I_1(u) e^{-\int_{u+\tau}^{t} \beta_2 I_2(\sigma) + d \ d\sigma} \ du
\label{eqn: R_l integral}
\end{equation}
and differentiating with respect to $t$ we have

\begin{equation*}
    \frac{dR_l}{dt} = \gamma_1 I_1(t-\tau) - \beta_2 R_l I_2 -dR_l.
\end{equation*}  

We assume that the transition rates from $S$ to $I_1$, $S$ to $I_2$ and $R_l$ to $I_2$ follow the classical mass action law and all other transition rates are proportional to the compartment being left or entered. \autoref{fig: model diagram} shows a summarizing schematic of the model transitions. We note that using standard incidence for the disease transmission rates would make more biological sense, since it shouldn't matter how many people have the disease around you, only how many you come into contact with. Lastly, initial histories for system \eqref{eqn: model1} are prescribed by:
$$(S(u),I_1(u),I_2(u),R_l(u))=(\phi_1(u),\phi_2(u),\phi_3(u),\phi_4(u))$$
where $\phi_i(u)$, $i=1,2,3,4$ are bounded, continuous and nonnegative functions for $u\in [-\tau,0).$

\begin{figure}
    \centering
    \includegraphics[width=\textwidth]{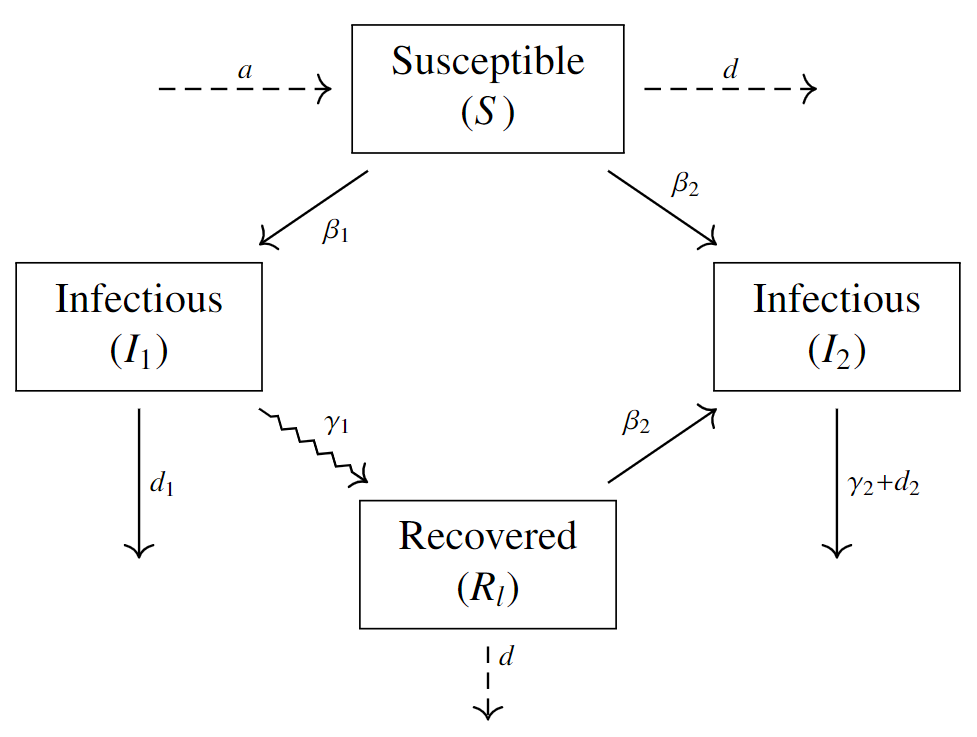}
    \caption{Schematic of the general model, system \eqref{eqn: model1}. Solid arrows correspond to disease-related transitions, dashed lines correspond to demographic transitions (birth and natural death) and the squiggle arrow corresponds to the time delay for an individual that has recovered from strain 1 to be susceptible to strain 2.}
    \label{fig: model diagram}
\end{figure}

\subsection{Non-negativity and boundedness}

We notice that the vector-valued function (1) and its derivative exist and are continuous. Therefore there exists a unique noncontinuable solution defined on some interval $[-\tau,s)$ where $s>0$ \citep{kuang_1993,Smith2011}. Our first step is to show that the model produces solutions that are biologically plausible. We prove this with the following two propositions. We first show that if solutions start nonnegative, then they will stay nonnegative on $[0,s)$. After that, we show solutions remain bounded for all time, which then implies $s=+\infty$ by Theorem 3.2 and Remark 3.3 in \citep{Smith2011}.

\begin{proposition}
Solutions to system (1) that start nonnegative stay nonnegative.
\label{th: full positivity}
\end{proposition}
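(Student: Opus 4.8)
The plan is to exploit the structure of system \eqref{eqn: model1}: each equation has the form $\dot{x} = (\text{nonnegative source}) - x\cdot(\text{continuous function})$, where the source is either the constant $a$, the delayed term $\gamma_1 I_1(t-\tau)$, or $0$. Rewriting each equation with an integrating factor yields a variation-of-parameters representation whose every term is manifestly nonnegative whenever the initial data are nonnegative. The existence of a unique noncontinuable solution on $[-\tau,s)$ is already guaranteed by the preceding discussion, so I only need to track signs on $[0,s)$.

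First I would treat the two infected compartments, since their equations are \emph{homogeneous} in the unknown. Writing $\dot I_1 = I_1(\beta_1 S - \gamma_1 - d_1)$ gives $I_1(t) = I_1(0)\exp\!\big(\int_0^t (\beta_1 S(r) - \gamma_1 - d_1)\,dr\big)$, and likewise $I_2(t) = I_2(0)\exp\!\big(\int_0^t (\beta_2 S(r) + \beta_2 R_l(r) - \gamma_2 - d_2)\,dr\big)$. Since the exponential is strictly positive and the integrals are finite on every compact subinterval of $[0,s)$ by continuity of the solution, we obtain $I_1(t), I_2(t) \ge 0$ on $[0,s)$ whenever $I_1(0), I_2(0) \ge 0$; crucially, this conclusion does not depend on the signs of $S$ or $R_l$, so no circularity arises.

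With $I_1 \ge 0$ established, I would then handle $S$ and $R_l$, whose equations carry a genuine nonnegative source. For $S$, write $\dot S = a - S(d + \beta_1 I_1 + \beta_2 I_2)$; the integrating factor $\mu(t) = \exp\!\big(\int_0^t (d + \beta_1 I_1 + \beta_2 I_2)\,dr\big)$ gives $S(t) = \mu(t)^{-1}\big(S(0) + \int_0^t a\,\mu(r)\,dr\big) \ge 0$, using $a > 0$. For $R_l$, the source is the delayed term $\gamma_1 I_1(t-\tau)$, which is nonnegative: for $t \in [0,\tau)$ it equals $\gamma_1 \phi_2(t-\tau) \ge 0$ by the hypothesis on the prescribed history, and for $t \ge \tau$ it equals $\gamma_1 I_1(t-\tau)$ with $t-\tau \ge 0$, already shown nonnegative. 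The analogous integrating-factor representation, with decay rate $d + \beta_2 I_2$, then yields $R_l(t) \ge 0$.

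The only real subtlety is the delay in the $R_l$ equation: one must verify that the forcing $\gamma_1 I_1(t-\tau)$ is nonnegative across the seam $t = \tau$, which is precisely why the argument is organized to establish $I_1 \ge 0$ on all of $[0,s)$ before treating $R_l$, and why nonnegativity of the history $\phi_2$ is invoked for $t \in [0,\tau)$. One could alternatively argue by contradiction, letting $t^*$ be the first time any coordinate vanishes and inspecting the sign of its derivative there, but the explicit variation-of-parameters formulas make nonnegativity transparent and avoid case analysis. I expect this delay bookkeeping, rather than any genuine analytic difficulty, to be the main point to get right.
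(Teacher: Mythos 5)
Your proposal is correct and follows essentially the same route as the paper: explicit integrating-factor/exponential representations for each component, from which nonnegativity is read off term by term. The only differences are cosmetic --- you establish $I_1,I_2\ge 0$ before $S$ (which neatly avoids the mild circularity in the paper's differential-inequality step for $I_1$, whose lower bound $I_1'\ge -(\gamma_1+d_1)I_1$ already presumes $SI_1\ge 0$), and you treat $R_l$ by variation of parameters on the delay equation with explicit seam bookkeeping at $t=\tau$ rather than by invoking the closed-form integral \eqref{eqn: R_l integral}; both work.
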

\begin{proof}
Observe that if $I_1(0)=0$, then $I_1(t)=0$ for $t\in[0,s)$. Similarly, $I_2(t)=0$ for $t\in[0,s)$ if $I_2(0)=0$. Thus we may assume that $I_1(0)>0$ and $I_2(0)>0$. Let $y(t) = d+\beta_1I_1+\beta_2I_2$ and observe that the first equation in system (1) can be rewritten as
\begin{equation*}
    S'(t) = a -y(t)S.
\end{equation*}
Applying the integrating factor method we obtain $$S(t) = \left(\int_0^t ae^{\int_0^\zeta y(\xi) d\xi} d\zeta  +S(0)\right)e^{-\int_0^t y(\xi) d\xi}> 0.$$ This implies that $S(t)$ is positive for $t\in[0,s)$. From the second equation of system (1) for $t\in[0,s)$ we have\begin{equation*}
    I_1'(t)\geq -\left(\gamma_1+d_1\right)I_1. 
\end{equation*} This implies $$I_1(t)\geq I_1(0)e^{-(\gamma_1+d_1)t}>0.$$ We see that $$I_2'(t)=p(t)I_2,$$ where $p(t) = \beta_1S+\beta_2R_l-\gamma_2-d_2.$ This implies that $$I_2(t)=I_2(0)e^{\int_0^t p(\xi) d\xi}>0$$ for $t\in[0,s)$. From equation (2) we have that $$R_l(t) = \int_0^{t-\tau} \gamma_1 I_1(u) e^{-\int_{u+\tau}^{t} \beta_2 I_2(\sigma) + d \ d\sigma} \ du>0$$ for $t\in[0,s)$. Hence, solutions with nonnegative initial conditions will remain nonnegative.
\end{proof}

 Throughout the rest of this paper, we assume that $S(0)>0,\,I_1(0)>0,\, I_2(0) \geq 0,\,R_l(0)\geq 0.$ \, $N=S+I_1+I_2+R_l$, and $N(0)=a/d=S(0)+I_1(0)+I_2(0)+R_l(0).$

\begin{proposition}\label{th: full bounded} Solutions to system \eqref{eqn: model1} are bounded.
\begin{proof}
Let $N_1=S+I_1.$ 
Since components of solutions are nonnegative and $I_1(t)>0$ for $t\in[-\tau,s)$, we have
    \begin{align*}
        N_1'(t) & < a-dS -d_1I_1\\
        &\leq a -\alpha_1 N_1,
    \end{align*}
where $\alpha_1 := \min\left\{d,d_1\right\}$. Hence
$$ S(t)+I_1(t)<a/\alpha_1.$$ In particular, we see that $I_1(t)\leq\max\left\{I_1(0),\frac{a}{\alpha_1}\right\}:=B$ for $t\in [0,s).$
Define
    $$f(t) = N(t) + \gamma_1 \int_{t-\tau}^t I_1(s) \ ds.$$
Then
    \begin{align*}
        f'(t) &= a-dS -d_1I_1 - (\gamma_2 + d_2) I_2 - dR_l\\
        &\leq a -\alpha N = a + \alpha \gamma_1 \int_{t-\tau}^t I_1(s) \ ds-\alpha f \leq a + \alpha \gamma_1\tau B -\alpha f,
    \end{align*}
where $\alpha := \min\left\{d,d_1,\gamma_2+d_2\right\}$. Let $A=a + \alpha \gamma_1\tau B $. This yields $f(t)\leq \frac{A}{\alpha}+\left(N(0)-\frac{A}{\alpha}\right)e^{-\alpha t}$, hence $f(t)\leq \max\{N(0),\frac{A}{\alpha}\}.$ Since $N(t) \leq f(t),$ this proves boundedness of solutions. 
\end{proof} 
\end{proposition}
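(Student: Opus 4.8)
The plan is to establish a uniform bound on the total population $N=S+I_1+I_2+R_l$ via a linear comparison (Gronwall-type) argument, but because the $R_l$ equation feeds in the \emph{delayed} recovery flux $\gamma_1 I_1(t-\tau)$, I would first dispatch the $(S,I_1)$ subsystem on its own. Setting $N_1=S+I_1$ and adding the first two equations, the infection terms $\pm\beta_1 S I_1$ telescope, leaving $N_1'=a-dS-(\gamma_1+d_1)I_1-\beta_2 S I_2$. By \autoref{th: full positivity} every component is nonnegative, so the terms $-\beta_2 S I_2$ and $-\gamma_1 I_1$ are nonpositive and can be discarded to obtain $N_1'\le a-\alpha_1 N_1$ with $\alpha_1=\min\{d,d_1\}$. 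A standard linear comparison then gives $S(t)+I_1(t)<a/\alpha_1$, and in particular a uniform bound $I_1(t)\le B:=\max\{I_1(0),a/\alpha_1\}$, which is the quantity I will need to control the delay.

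The hard part is that summing all four equations does \emph{not} close into a clean inequality for $N$. Most mass-action terms cancel in pairs, but the recovery flux leaves the $I_1$ compartment at the present time as $-\gamma_1 I_1(t)$ while it enters $R_l$ with a lag as $+\gamma_1 I_1(t-\tau)$; the time shift prevents these from cancelling, so $N'$ retains an uncontrolled $+\gamma_1 I_1(t-\tau)$ term that is not dominated by $-\alpha N$.

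To absorb this leftover term I would pass to the augmented functional $f(t)=N(t)+\gamma_1\int_{t-\tau}^{t} I_1(s)\,ds$. Differentiating and applying Leibniz's rule, the boundary contributions $\gamma_1\bigl(I_1(t)-I_1(t-\tau)\bigr)$ exactly cancel both the stray $+\gamma_1 I_1(t-\tau)$ and the $-\gamma_1 I_1(t)$ appearing in $N'$, reducing it to $f'(t)=a-dS-d_1 I_1-(\gamma_2+d_2)I_2-dR_l\le a-\alpha N$ with $\alpha=\min\{d,d_1,\gamma_2+d_2\}$. Writing $N=f-\gamma_1\int_{t-\tau}^{t} I_1$ and invoking the bound $I_1\le B$ to estimate $\gamma_1\int_{t-\tau}^{t} I_1\le \gamma_1\tau B$, the inequality becomes $f'\le A-\alpha f$ with $A:=a+\alpha\gamma_1\tau B$. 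One last linear comparison yields $f(t)\le\max\{N(0),A/\alpha\}$, and since $0\le N(t)\le f(t)$ this bounds $N$ and hence each of $S,I_1,I_2,R_l$. I expect the only delicate point to be the bookkeeping in the derivative of $f$, where the precise cancellation of the delayed flux against the present-time recovery term must be verified.
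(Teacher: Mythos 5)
Your proposal is correct and follows essentially the same route as the paper's proof: bound the $(S,I_1)$ subsystem first to get $I_1\le B$, then use the augmented functional $f(t)=N(t)+\gamma_1\int_{t-\tau}^{t}I_1(s)\,ds$ so that the Leibniz boundary terms cancel the delayed recovery flux, and finish with the linear comparison $f'\le A-\alpha f$. The bookkeeping you flag as the delicate point does indeed work out exactly as you describe, and matches the paper's computation.
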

The fact that solutions are bounded for all $t\geq0$ implies that $s=+\infty$.
\subsection{Analysis of equilibria}


In order to gain a global understanding of the dynamics of system \eqref{eqn: model1}, we study the existence, number and stability of its equilibria. For infectious disease models, the dynamics can usually be characterized using the basic reproduction number \citep{delamater_2019}. By the next generation matrix method in \citep{VANDENDRIESSCHE200229} we find the basic reproduction numbers for strain 1 and strain 2 to be

\begin{equation*}
    R_{i} = \frac{a\beta_i}{d(d_i+\gamma_i)}
\end{equation*} for $i=1, 2$.
The full system exhibits four biologically relevant steady states: a disease-free steady state ($E_0$), two steady states where either strain 1 outcompetes strain 2 ($E_1$) or strain 2 outcompetes strain 1 ($E_2$), and a coexistence steady state ($E_c$). They take the following forms:

\begin{equation}
\label{eqn: E_0}
    E_{0} = \left(\frac{a}{d}, 0,0,0\right),
\end{equation}   

\begin{equation}
\label{eqn: E_1}
    E_1 = \left(\frac{\gamma_1+d_1}{\beta_1}, \frac{d}{\beta_1}\left(R_{1} - 1\right),0,\frac{\gamma_1}{\beta_1}\left(R_{1}-1\right)\right),
\end{equation}   

\begin{equation}
\label{eqn: E_2}
    E_2 = \left(\frac{\gamma_2+d_2}{\beta_2}, 0,\frac{d}{\beta_2}\left(R_{2} - 1\right),0\right),
\end{equation}   

\begin{equation}
\label{eqn: E_c}
  E_c = \left(\frac{\gamma_1+d_1}{\beta_1}, I_1^*,I_2^*,\frac{a}{d}\left(\frac{R_1-R_2}{R_{1}R_{2}}\right)\right),  
\end{equation}where$$I_1^*=\frac{d\left(R_{1}-R_{2}\right)}{d_1\left(\frac{a\beta_1}{dd_1} - R_{2}\right)}$$and$$I_2^*=\frac{d\left(\gamma_1R_{1}+d_1\right)\bigg[R_{2}-\frac{a\beta_1}{d\left(\gamma_1R_{1}+d_1\right)}\bigg]}{d_1\beta_2\left(\frac{a\beta_1}{d_1d}-R_{2}\right)}$$

We see that $E_0$ always exists, $E_1$ exists when $R_{1}>1$, and $E_2$ exists when $R_{2}>1$. From the form of $E_c$ we see that it exists when $\frac{a\beta_1}{d\left(\gamma_1R_{1}+d_1\right)}<R_{2}<R_{1}$ and $R_2<\frac{a\beta_1}{dd_1}.$ Notice that since $R_1<\frac{a\beta_1}{dd_1}$ is always true, we can equivalently say $\frac{a\beta_1}{d\left(\gamma_1R_{1}+d_1\right)}<R_{2}<R_{1}<\frac{a\beta_1}{dd_1}$.
We summarize the above discussion in the following proposition.
\begin{proposition}The following are true for system \eqref{eqn: model1}. 
\label{eq exist}
\begin{enumerate}
\item The disease-free equilibrium, $E_0$, always exists.
\item The boundary equilibria, $E_i$, exist when $R_i>1$ for $i=1, 2.$
    \item The coexistence equilibrium, $E_c$, exists exactly when $\frac{a\beta_1}{d\left(\gamma_1R_{1}+d_1\right)}<R_{2}   <R_{1}$ and $R_2<\frac{a\beta_1}{dd_1}$.
\end{enumerate}
\end{proposition}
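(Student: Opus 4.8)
The plan is to use that every equilibrium is a constant solution, so that all four derivatives in \eqref{eqn: model1} vanish and, crucially, the delayed argument satisfies $I_1(t-\tau)=I_1$ at a steady state. The delay therefore drops out entirely and the equilibrium equations coincide with those of the associated ODE system,
\begin{align*}
0 &= a - dS - \beta_1 S I_1 - \beta_2 S I_2, \\
0 &= I_1\left(\beta_1 S - \gamma_1 - d_1\right), \\
0 &= I_2\left(\beta_2 S + \beta_2 R_l - \gamma_2 - d_2\right), \\
0 &= \gamma_1 I_1 - \beta_2 R_l I_2 - d R_l.
\end{align*}
First I would exploit the factored second and third equations, which give the dichotomies $I_1=0$ or $S=(\gamma_1+d_1)/\beta_1$, and $I_2=0$ or $S+R_l=(\gamma_2+d_2)/\beta_2$, and then split into the four resulting cases.

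The three noncoexistence cases are routine. If $I_1=I_2=0$, the last equation forces $R_l=0$ and the first gives $S=a/d$, yielding $E_0$, which exists unconditionally. If $I_1>0,\,I_2=0$, then $S=(\gamma_1+d_1)/\beta_1$, the last equation gives $R_l=\gamma_1 I_1/d$, and the first solves to $I_1=d(R_1-1)/\beta_1$ after rewriting $a\beta_1=R_1 d(\gamma_1+d_1)$ via the definition of $R_1$; this recovers $E_1$ in \eqref{eqn: E_1}, and nonnegativity of $I_1,R_l$ is equivalent to $R_1>1$. The case $I_1=0,\,I_2>0$ is symmetric: the last equation forces $R_l=0$, the third gives $S=(\gamma_2+d_2)/\beta_2$, and the first yields $I_2=d(R_2-1)/\beta_2$, i.e.\ $E_2$ in \eqref{eqn: E_2}, existing precisely when $R_2>1$.

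For the coexistence case $I_1>0,\,I_2>0$ both algebraic constraints are active, fixing $S=(\gamma_1+d_1)/\beta_1=a/(dR_1)$ and, by subtraction, $R_l=(\gamma_2+d_2)/\beta_2-(\gamma_1+d_1)/\beta_1=(a/d)(R_1-R_2)/(R_1R_2)$, which already matches the $R_l$-coordinate of $E_c$ in \eqref{eqn: E_c}. With $S$ and $R_l$ now fixed constants, the remaining first and fourth equations form a $2\times2$ linear system in $(I_1,I_2)$, which I would solve by eliminating $I_1$ through $\gamma_1 I_1=R_l(\beta_2 I_2+d)$ and substituting into the first equation; this produces the stated closed forms for $I_1^*$ and $I_2^*$.

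The hard part will be the final positivity bookkeeping, because both $I_1^*$ and $I_2^*$ share the denominator proportional to $\tfrac{a\beta_1}{dd_1}-R_2$, whose sign must be tracked. I would argue as follows. Positivity of the $R_l$-coordinate is equivalent to $R_1>R_2$. Since $\tfrac{a\beta_1}{dd_1}=R_1\cdot\tfrac{\gamma_1+d_1}{d_1}>R_1$, the inequality $R_1>R_2$ already forces $R_2<\tfrac{a\beta_1}{dd_1}$, so the common denominator is positive; consequently $I_1^*$, whose numerator $d(R_1-R_2)$ is then positive, is automatically positive (the competing branch with both numerator and denominator negative is excluded, as it would require $R_2>R_1$, contradicting $R_l>0$). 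It then remains only to analyze $I_2^*$: with the denominator positive, $I_2^*>0$ is equivalent to its bracketed numerator being positive, i.e.\ $R_2>\tfrac{a\beta_1}{d(\gamma_1 R_1+d_1)}$. Collecting these conditions, all four coordinates are positive exactly when $\tfrac{a\beta_1}{d(\gamma_1 R_1+d_1)}<R_2<R_1$, with the auxiliary bound $R_2<\tfrac{a\beta_1}{dd_1}$ stated in the proposition being the redundant consequence noted in the text; conversely, since each step is an equivalence, these inequalities reverse to guarantee existence, completing the proof.
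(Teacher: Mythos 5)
Your derivation is correct and matches the paper's (largely implicit) argument: the paper simply exhibits the closed forms $E_0$, $E_1$, $E_2$, $E_c$ and reads the existence conditions off their nonnegativity, which is exactly what you do, just with the case analysis and the elimination steps written out. Your observation that $R_2<\tfrac{a\beta_1}{dd_1}$ is redundant given $R_2<R_1<\tfrac{a\beta_1}{dd_1}$ is also the same remark the paper makes immediately before stating the proposition.
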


\begin{remark}
If we let 
\begin{equation}
    f(R_1)=\frac{a\beta_1}{d\left(\gamma_1R_{1}+d_1\right)},
    \label{eqn: R2 curve}
\end{equation} then we see that $f\left(1\right)=\frac{a\beta_1}{d\left(d_1+\gamma_1\right)}=R_1.$ That is, if $R_1=1$ then $R_2=1$ on this curve.
\end{remark}
We have the following result for $E_0$.
\begin{proposition}
\label{LAS E0}
$E_0$ is locally asymptotically stable when $\max\{R_1,R_2\}<1$. $E_0$ is unstable when $R_{1}>1$ or $R_{2}>1$.\begin{proof}
The Jacobian matrix evaluated at $E_0$ is 
\begin{equation*}
\begin{pmatrix}
-d& -\frac{a\beta_1}{d} & -\frac{a\beta_2}{d} & 0\\
0 & \left(d_1+\gamma_1\right)\left(R_{1} -1\right) & 0& 0\\
0&0 &\left(d_2+\gamma_2\right)\left(R_{2} - 1\right) & 0\\
0&\gamma_1e^{-\lambda\tau} &0 & -d
\end{pmatrix}.
\end{equation*}

The corresponding eigenvalues are

\begin{equation*}
    \lambda_{1,2} = -d,
\end{equation*}

\begin{equation*}
    \lambda_3 = \left(d_1+\gamma_1\right)\left(R_{1} -1\right),
\end{equation*}

\begin{equation*}
    \lambda_4 = \left(d_2+\gamma_2\right)\left(R_{2} - 1\right).
\end{equation*}
Therefore, $E_0$ is locally asymptotically stable whenever $\max\{R_1,R_2\}<1$. It is unstable whenever either $R_{1}>1$ or $R_{2}>1$.
\end{proof}
\end{proposition}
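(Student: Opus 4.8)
The plan is a linear stability analysis adapted to the delay setting. Because \eqref{eqn: model1} is a system of delay differential equations, local asymptotic stability of $E_0$ is governed by the real parts of the roots of the associated characteristic equation, obtained by seeking solutions of the linearized system proportional to $e^{\lambda t}$. First I would linearize the right-hand side of \eqref{eqn: model1} about $E_0=(a/d,0,0,0)$. The only delayed term is $\gamma_1 I_1(t-\tau)$ in the $R_l$ equation; substituting $I_1(t-\tau)=e^{\lambda(t-\tau)}$ introduces a factor $e^{-\lambda\tau}$, so the delay enters the characteristic matrix $J$ through a single entry, the $(4,2)$ position $\gamma_1 e^{-\lambda\tau}$.

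Next I would form and factor the characteristic equation $\det(\lambda\mathbb{I}-J)=0$. The key structural observation is that in $\lambda\mathbb{I}-J$ the first and fourth columns each contain only their diagonal entry $\lambda+d$, while the upper-left $3\times 3$ block is upper triangular. Expanding the determinant along these two clean columns factors it completely as
$$(\lambda+d)^2\big(\lambda-(d_1+\gamma_1)(R_1-1)\big)\big(\lambda-(d_2+\gamma_2)(R_2-1)\big)=0.$$
Crucially, the delayed entry $\gamma_1 e^{-\lambda\tau}$ sits in row $4$, column $2$, which is deleted in the cofactor expansion along column $4$; hence the transcendental factor $e^{-\lambda\tau}$ never appears in the determinant. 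This is precisely the mechanism making the delay harmless at $E_0$: the characteristic equation reduces to an ordinary polynomial, independent of $\tau$, whose roots are the real numbers $\lambda_{1,2}=-d$, $\lambda_3=(d_1+\gamma_1)(R_1-1)$ and $\lambda_4=(d_2+\gamma_2)(R_2-1)$.

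Finally I would read off the conclusions. Since all rate constants are positive, $\lambda_{1,2}=-d<0$ always, $\lambda_3<0$ if and only if $R_1<1$, and $\lambda_4<0$ if and only if $R_2<1$. Therefore every root has negative real part exactly when $\max\{R_1,R_2\}<1$, giving local asymptotic stability, whereas if $R_1>1$ or $R_2>1$ the corresponding root is positive and $E_0$ is unstable.

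I do not anticipate a genuine difficulty here; the work is essentially organizational. The two points deserving care are the appeal to the standard principle that, for this DDE, local asymptotic stability is determined by the location of the characteristic roots (so that the finitely many explicit roots settle the question despite the infinite-dimensional phase space), and the bookkeeping showing that $e^{-\lambda\tau}$ cannot survive the determinant expansion. Once the factorization is established, the stability thresholds follow immediately from the signs of the explicit eigenvalues.
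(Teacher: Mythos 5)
Your proposal is correct and follows essentially the same route as the paper: form the characteristic matrix at $E_0$ with the delay appearing only in the $(4,2)$ entry $\gamma_1 e^{-\lambda\tau}$, observe that this entry is annihilated in the determinant expansion so the characteristic equation is a $\tau$-independent polynomial with roots $-d$ (double), $(d_1+\gamma_1)(R_1-1)$, and $(d_2+\gamma_2)(R_2-1)$, and read off stability from their signs. Your explicit justification of why $e^{-\lambda\tau}$ cannot survive the expansion is a small but welcome addition that the paper leaves implicit.
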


In addition to local stability, we have the following global stability result for $E_0$.
\begin{theorem}\label{GAS E0}
If $\max\{R_1,R_2\}<1$, then the disease-free equilibrium $E_0$ is globally asymptotically stable.
 
\begin{proof}
Observe that
    \begin{align*}
        \frac{dS}{dt}&= a - dS - \beta_1SI_1 - \beta_2SI_2\\
        &\leq a - dS,
    \end{align*}
which implies $\limsup_{t\to\infty}S(t)      \leq a/d$. If $S(0)<a/d$, then $S(t)\leq a/d$ for all $t>0$. If $S(0)>a/d$, then $\frac{dS}{dt}<0$. Hence, the region $\{S(t): 0\leq S(t)\leq a/d \}$ is positively invariant and attracting.

By assumption, $R_1 = \frac{a\beta_1}{d(\gamma_1+d_1)}<1$, which implies that there exists $\varepsilon_1>0$ such that $\beta_1(a/d + \varepsilon_1) < \gamma_1 + d_1$. For this $\varepsilon_1$, there exists $t_1>0$ such that, for $t>t_1$, $S(t)< a/d + \varepsilon_1$. Then, for $t>t_1$,
    \begin{align*}
        \frac{dI_1}{dt} &= \beta_1SI_1 - (\gamma_1 + d_1)I_1\\
        &<\left[\beta_1\left(\frac{a}{d} + \varepsilon_1\right) - (\gamma_1 + d_1) \right]I_1.\\
    \end{align*}
Since $\beta_1(a/d + \varepsilon_1) < \gamma_1 + d_1$, $I_1(t)$ is exponentially decreasing for $t>t_1$. This result implies that $\liminf_{t\to\infty}I_1(t)\leq0$. Since $I_1(t)\geq0$ for all $t>0$, $\limsup_{t\to\infty}I_1(t) \geq 0$. Hence $\lim_{t\to\infty}I_1(t)=0$. 

Since $\lim_{t\rightarrow \infty}I_1(t) = 0$, we see that for any $\varepsilon>0,$ there is a $t_*= t_*(\varepsilon)$ such that for $t>t_*$, we have
    $$\frac{dR_l}{dt} = \varepsilon-\beta_2I_2R_l - dR_l<\varepsilon - dR_l $$
and a similar argument can be used to show that $R_l$ is eventually bounded by $\varepsilon/d$ and hence $R_l\to0$ as $\varepsilon \to 0$ and $ t \to \infty.$

Finally, the assumption that $R_2 = \frac{a\beta_2}{d(\gamma_2+d_2)}<1$ implies that there exists $\varepsilon_2>0$ such that $\beta_2 \left(a/d + \varepsilon_1 + \varepsilon_2\right)<\gamma_2+d_2$. Since $R_l\to0$ as $t\to\infty$, there exists $t_2>0$ such that, for this $\varepsilon_2$, $R_l<\varepsilon_2$ for $t>t_2$. Then, for $t>t_2$,
    \begin{align*}
        \frac{dI_2}{dt} &= \beta_2SI_2 + \beta_2R_lI_2 - (\gamma_2 + d_2)I_2\\
        &< \beta_2 \left(\frac{a}{d} + \varepsilon_1\right)I_2 + \beta_2(\varepsilon_2)I_2 - (\gamma_2 + d_2)I_2\\
        &= \left[\beta_2\left(\frac{a}{d}+\varepsilon_1 + \varepsilon_2 \right) - (\gamma_2 + d_2)\right]I_2
    \end{align*}
Thus $I_2(t)$ is exponentially decreasing. By a similar argument as above, $\lim_{t\to\infty}I_2(t) = 0$.

We have shown that $\lim_{t\to\infty}I_1(t) = 0$, $\lim_{t\to\infty}I_2(t) = 0$, and $\lim_{t\to\infty}R_l(t)=0$. Thus we obtain the limiting equation
        $$\frac{dS}{dt} = a-dS$$
which implies that $\lim_{t\to\infty}S(t) = a/d$. This concludes the proof.
\end{proof}
\end{theorem}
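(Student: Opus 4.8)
The plan is to prove \emph{global attractivity} of $E_0$ by a cascading limiting argument that exploits the near-triangular coupling of system \eqref{eqn: model1}, and then combine this with the local asymptotic stability already established in Proposition \ref{LAS E0} to conclude global asymptotic stability. The key structural observation is that $S$ is controlled by a scalar comparison equation, $I_1$ is then slaved to $S$, $R_l$ is slaved to the delayed $I_1$, and finally $I_2$ is slaved to $S$ and $R_l$; so the four variables can be driven to their equilibrium values one at a time.

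First I would bound $S$ from above. Dropping the nonnegative loss terms gives $S' \le a - dS$, so by a standard comparison argument $\limsup_{t\to\infty} S(t) \le a/d$, and the band $0 \le S \le a/d$ is positively invariant and attracting. Next, using $R_1 < 1$ I would extract a strict margin: there is $\varepsilon_1 > 0$ with $\beta_1(a/d + \varepsilon_1) < \gamma_1 + d_1$. For large $t$ we have $S(t) < a/d + \varepsilon_1$, and substituting into the $I_1$ equation yields $I_1' < [\beta_1(a/d + \varepsilon_1) - (\gamma_1 + d_1)]I_1$ with a strictly negative coefficient, forcing $I_1(t) \to 0$.

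With $I_1(t)\to 0$, the delayed forcing term $\gamma_1 I_1(t-\tau)$ also tends to $0$, so eventually $R_l' \le \varepsilon - dR_l$ for arbitrarily small $\varepsilon$, which drives $R_l(t)\to 0$; this is precisely where the delay turns out to be harmless, since a vanishing input stays vanishing after a fixed shift. I would then use $R_2 < 1$ to choose $\varepsilon_2 > 0$ with $\beta_2(a/d + \varepsilon_1 + \varepsilon_2) < \gamma_2 + d_2$; once $S < a/d + \varepsilon_1$ and $R_l < \varepsilon_2$ hold simultaneously, the $I_2$ equation gives $I_2' < [\beta_2(a/d + \varepsilon_1 + \varepsilon_2) - (\gamma_2 + d_2)]I_2$ with negative coefficient, so $I_2(t)\to 0$. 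Finally, with $I_1, I_2, R_l \to 0$ the $S$ equation is asymptotic to $S' = a - dS$, whence $S(t) \to a/d$ and every trajectory converges to $E_0$.

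The main obstacle is the careful bookkeeping of the perturbation parameters: the gap conditions derived from $R_1 < 1$ and $R_2 < 1$ must be shown to survive the successive $\varepsilon$-errors, so the $\varepsilon_i$ have to be chosen small enough and the threshold times ordered correctly before each comparison is applied. The delay requires only the mild remark that convergence of $I_1(t)$ implies convergence of $I_1(t-\tau)$, so it introduces no genuine difficulty; an alternative Lyapunov-functional proof would instead have to absorb the delay through an integral term of the form $\int_{t-\tau}^t I_1$, which is feasible but less transparent than the cascade here. Combining the resulting global attractivity with Proposition \ref{LAS E0} yields global asymptotic stability.
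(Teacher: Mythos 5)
Your cascading comparison argument — bounding $S$ by $a/d+\varepsilon_1$, using $R_1<1$ to force $I_1\to 0$, passing the vanishing delayed forcing through to $R_l\to 0$, then using $R_2<1$ to force $I_2\to 0$, and finally recovering $S\to a/d$ from the limiting equation — is exactly the proof given in the paper, down to the choice and ordering of the $\varepsilon_i$. The proposal is correct and takes essentially the same approach.
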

The conditions that govern the global stability of $E_0$ make good biological sense. 

We have the following result for $E_1$.
\begin{proposition}
\label{LAS E1}
If $R_{1}>1$, then $E_{1}$ exists. Furthermore we have,
\begin{enumerate}
    \item If $R_{2}<\frac{a\beta_1}{d\left(\gamma_1R_{1}+d_1\right)}$, then $E_1$ is asymptotically stable.
    \item If $R_{2}>\frac{a\beta_1}{d\left(\gamma_1R_{1}+d_1\right)}$, then $E_1$ is unstable.
\end{enumerate}

\begin{proof}
The Jacobian matrix evaluated at $E_1$ is 
\begin{equation*}
\begin{pmatrix}
-dR_{1}& -\left(d_1+\gamma_1\right) & -\frac{\beta_2\left(d_1+\gamma_1\right)}{\beta_1} & 0\\
d\left(R_{1}-1\right) & 0 & 0& 0\\
0&0 & \frac{\beta_2 d_1}{\beta_1} -(d_2+\gamma_2) + \frac{\gamma_1R_{1}\beta_2}{\beta_1} & 0 \\
0&\gamma_1e^{-\lambda\tau} & -\frac{\beta_2\gamma_1}{\beta_1}\left(R_{1}-1\right) & -d
\end{pmatrix}.
\end{equation*}
The corresponding characteristic polynomial factors to

\begin{equation*}
    h(\lambda) = g(\lambda)\left(-d-\lambda\right)\left(\frac{\left(d_2+\gamma_2\right)\left(d_1d+\gamma_1dR_{1}\right)}{a\beta_1}\left(R_{2}-\frac{a\beta_1}{d_1d+\gamma_1dR_{1}}\right)-\lambda\right),
\end{equation*}
where
    \begin{equation*}
        g(\lambda) = d\left(d_1+\gamma_1\right)\left(R_{1}-1\right) + dR_{1}\lambda +\lambda^2.
    \end{equation*}

Therefore, the corresponding roots are
\begin{equation*}
\begin{aligned}
    \lambda_1 = & -d,\\
    \lambda_2 = & \frac{\left(d_2+\gamma_2\right)\left(d_1d+\gamma_1dR_{1}\right)}{a\beta_1}\left(R_{2}-\frac{a\beta_1}{d_1d+\gamma_1dR_{1}}\right),
\end{aligned}
\end{equation*}
and the roots to the quadratic equation 
$g(\lambda) = d\left(d_1+\gamma_1\right)\left(R_{1}-1\right) + dR_{1}\lambda +\lambda^2.$
Consequently, $\lambda_1<0$ and $\lambda_2<0$ by assumption (1). In addition, by the Routh-Hurwitz stability criterion for quadratic equations \citep{brauer_2012}, $g(\lambda)$ has roots with negative real parts since $R_{1}>1$. Thus all eigenvalues have negative real part.
Lastly, we see that $E_1$ is unstable if and only if $\lambda_2>0$, that is, $R_{2}>\frac{a\beta_1}{d\left(d_1+\gamma_1R_{1}\right)}$. This concludes the proof.
\end{proof}
\end{proposition}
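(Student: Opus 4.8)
The plan is to establish the stability dichotomy by linearizing system \eqref{eqn: model1} about $E_1$ and locating the roots of the associated characteristic equation; existence of $E_1$ under $R_1>1$ is already supplied by Proposition \ref{eq exist}. First I would form the characteristic matrix. Linearizing the $R_l$-equation keeps the delayed term $\gamma_1 I_1(t-\tau)$, and inserting the ansatz $(\delta S,\delta I_1,\delta I_2,\delta R_l)=v\,e^{\lambda t}$ replaces $I_1(t-\tau)$ by $e^{-\lambda\tau}$ times the $I_1$-component. Consequently the delay enters the Jacobian $J(E_1)$ only through the single entry $\gamma_1 e^{-\lambda\tau}$ in the $(4,2)$ slot, and the roots are those of $\det\!\big(J(E_1)-\lambda I\big)=0$.

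The crux of the argument—and the reason the delay is \emph{harmless}—is a structural observation about this determinant. The fourth column of $J(E_1)-\lambda I$ (the $R_l$-column) has a single nonzero entry, namely $-d-\lambda$ in the $(4,4)$ position, so cofactor expansion along that column deletes the entire fourth row and with it the delayed entry $\gamma_1 e^{-\lambda\tau}$. Hence $e^{-\lambda\tau}$ never appears in the characteristic equation, which is therefore a genuine polynomial rather than a transcendental function of $\lambda$. Expanding the remaining $3\times 3$ minor along its third row (again holding a single nonzero entry) then factors the characteristic equation as $(-d-\lambda)$ times a linear factor coming from the $I_2$ diagonal entry times the quadratic $g(\lambda)=\lambda^2+dR_1\lambda+d(d_1+\gamma_1)(R_1-1)$ from the $(S,I_1)$ block. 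I expect this bookkeeping—confirming that the delay truly drops out and that the factorization is exactly as claimed—to be the main thing to get right; once the characteristic equation is a polynomial with finitely many roots, local asymptotic stability is governed entirely by the signs of the real parts of those roots (cf. \citep{Smith2011}).

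It then remains to read off the three factors. The root $\lambda_1=-d$ is always negative. The linear factor yields $\lambda_2$, which I would simplify by using $R_2=a\beta_2/\big(d(d_2+\gamma_2)\big)$ to eliminate $\beta_2$, obtaining $\lambda_2=\frac{(d_2+\gamma_2)\,d(d_1+\gamma_1 R_1)}{a\beta_1}\bigl(R_2-\frac{a\beta_1}{d(\gamma_1 R_1+d_1)}\bigr)$; since its prefactor is positive, $\lambda_2$ carries the sign of $R_2-\frac{a\beta_1}{d(\gamma_1 R_1+d_1)}$. For the quadratic, the Routh--Hurwitz criterion \citep{brauer_2012} gives roots with negative real part precisely when both coefficients are positive: the linear coefficient $dR_1$ is always positive, and the constant term $d(d_1+\gamma_1)(R_1-1)$ is positive exactly because $R_1>1$ by hypothesis. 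Combining these facts, when $R_2<\frac{a\beta_1}{d(\gamma_1 R_1+d_1)}$ every root has negative real part and $E_1$ is locally asymptotically stable, whereas when $R_2>\frac{a\beta_1}{d(\gamma_1 R_1+d_1)}$ the real root $\lambda_2$ is positive and $E_1$ is unstable, which is exactly the stated dichotomy.
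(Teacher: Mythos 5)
Your proposal is correct and follows essentially the same route as the paper: compute the Jacobian at $E_1$, factor the characteristic equation into $(-d-\lambda)$, the linear $I_2$ factor giving $\lambda_2$, and the quadratic $g(\lambda)$ from the $(S,I_1)$ block, then apply Routh--Hurwitz with $R_1>1$. Your explicit observation that cofactor expansion along the $R_l$-column deletes the $\gamma_1 e^{-\lambda\tau}$ entry is a nice articulation of why $\tau$ is harmless here, a point the paper only records afterward in a separate proposition.
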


This previous proposition shows that due to the immune evasion of the emerging strain, the reproduction number of the emerging strain must be significantly lower than that of the established strain for it to competitively exclude the emerging strain. In addition to local asymptotic stability, we have the following result for global stability of $E_1$.
\begin{theorem}
\label{GS E1}
If $R_1>1$, $R_2<1$, $R_{2}<\frac{a\beta_1}{d_1d+\gamma_1dR_{1}}$ and $\beta_2\left(\frac{a}{d} + D\right) < \gamma_2 + d_2$ where $D=\frac{\gamma_1 C}{d}$, $C=\frac{\beta_1 B}{\gamma_1+d_1}$ and $B=\left(\frac{a}{2d}\right)^2$, then $E_1$ is globally asymptotically stable.

\begin{proof}
    Consider the sum $S(t) + I_1(t)$. Observe that
        \begin{align*}
            \frac{dS}{dt} + \frac{dI_1}{dt} &= a - dS - \beta_2SI_2 - (\gamma_1 + d_1)I_1\\
            &\leq a - dS - (\gamma_1 + d_1)I_1\\
            &\leq a - \alpha(S+ I_1) \tag{$*$}\label{res1}
        \end{align*}
    where $\alpha = \min\{d,\gamma_1 + d_1\}$. In practice, we assume that $d_1\geq d$ and so $\alpha = d$. Recall that $S(0)+I_1(0)\leq a/d$. We see that $S(t) + I_1(t) \leq a/d$; hence both $S(t)$ and $I_1(t)$ are bounded above. Furthermore, because the arithmetic mean is greater than or equal to the geometric mean, $SI_1 \leq \left(\frac{1}{2} \frac{a}{d}\right)^2 = B$. Hence we obtain,
    
        $$\frac{dI_1}{dt}\leq \beta_1 B - \left(\gamma_1 +d_1\right)I_1$$
        and therefore, $\limsup_{t \rightarrow \infty}{I_1}\leq \frac{\beta_1B}{\gamma_1+d_1}=C.$ 
        
        Since $\beta_2\left(\frac{a}{d} + D\right) < \gamma_2 + d_2,$ we see that there is small constant $\varepsilon_0>0$ such that 
        $$\beta_2\left(\frac{a}{d} + D +3\varepsilon_0\right) < \gamma_2 + d_2.
        $$Let $\varepsilon>0$ and $\varepsilon_0=\frac{\gamma_1}{d}\varepsilon$. Thus there exists $t_{\varepsilon}>0$ such that $\limsup_{t\rightarrow\infty}{I_1}< \frac{\beta_1 B}{\gamma_1+d_1}+\varepsilon.$ for $t>t_{\varepsilon}$. Therefore, for $t>t_{\varepsilon}$ we have 
        
        $$\frac{dR_l}{dt}< \gamma_1\left(\frac{\beta_1 B}{\gamma_1+d_1}+\varepsilon\right) - dR_l$$
        and we obtain $$\limsup_{t\rightarrow\infty}{R_l}\leq\frac{\gamma_1}{d}\left(\frac{\beta_1 B}{\gamma_1+d_1}+\varepsilon\right)=D+\frac{\gamma_1}{d}\varepsilon.$$ Therefore, for any $\varepsilon_0$ there exists $t_0>t_\varepsilon$ such that $$\limsup_{t\rightarrow\infty}{R_l}\leq\frac{\gamma_1}{d}\left(\frac{\beta_1 B}{\gamma_1+d_1}+\varepsilon\right)=D+\varepsilon_0.$$ Hence there exists $t_2>$ such that for $t>t_2$, $R_l<D+2\varepsilon_0$. In addition, since $R_2<1$, there exists $\varepsilon_1>0$ such that $\beta_2(a/d + \varepsilon_1) < \gamma_2 + d_2$. Our previous result \eqref{res1} implies that, for this $\varepsilon_1$, there exists $t_1>0$ such that $S(t) < a/d + \varepsilon_1$ for $t>t_1$. 
        We are now ready to control $I_2$. We have the following,
        
        \begin{align*}
          \frac{dI_2}{dt} & = \beta_2 SI_2 +\beta_2 R_l I_2 -\left(\gamma_2 +d_2\right)I_2\\
             & < \beta_2 \left(\frac{a}{d}+\varepsilon_1\right) I_2 + \beta_2R_lI_2-\left(\gamma_2 +d_2\right)I_2\\
             &<\left[\beta_2\left(\frac{a}{d}+\varepsilon_1 \right)+ \beta_2\left(D+2\varepsilon_0\right) - \left(\gamma_2 +d_2\right)\right]I_2\\
             &= \left[\beta_2\left(\frac{a}{d}+\varepsilon_1 + D+2\varepsilon_0\right) - \left(\gamma_2 +d_2\right)\right]I_2
        \end{align*}
    
    Letting $\varepsilon_1=\varepsilon_0$ we obtain    
$$\frac{dI_2}{dt} < \left[\beta_2\left(\frac{a}{d}+3\varepsilon_0 + D\right) - \left(\gamma_2 +d_2\right)\right]I_2$$
        
    Thus, for $t>t_1$, $I_2(t)$ is exponentially decreasing, implying that $\liminf_{t\to\infty}I_2(t)\leq0$. However, since $I_2(t)$ is non-negative, $\limsup_{t\to\infty}I_2(t)\geq0$. Hence $\lim_{t\to\infty}I_2(t) = 0$.
    
    Observe that once $I_2$ goes to zero, $R_l$ does not impact the dynamics of the model, allowing us to consider the behavior of the resulting two-dimensional system:
        \begin{align*}
            \frac{dS}{dt} &= a - dS - \beta_1SI_1,\\
            \frac{dI_1}{dt} &= \beta_1SI_1 - \gamma_1 I_1 - d_1 I_1.
        \end{align*}
        It's easy to see that this system has a positive equilibrium point, $E^* = \left(\frac{\gamma_1+d_1}{\beta_1}, \frac{d}{\beta_1}\left(R_{1} - 1\right)\right)$ which is globally asymptotically stable. Finally, considering the limiting profile of $R_l$ we obtain, $\lim_{t \rightarrow \infty}R_l(t) =\frac{\gamma_1}{\beta_1}\left(R_{1}-1\right).$ Therefore, all trajectories of system \eqref{eqn: model1} tend to $E_1$.
\end{proof}


We have the following result for $E_2$.
\begin{proposition}
\label{LAS E2}
If $R_{2}>1$, then $E_{2}$ exists. Furthermore we have,
\begin{enumerate}
    \item If $\frac{R_{1}}{R_{2}}<1$, then $E_2$ is asymptotically stable.
    \item If $\frac{R_{1}}{R_{2}}>1$, then $E_2$ is unstable.
\end{enumerate}

\begin{proof}
The Jacobian matrix evaluated at $E_2$ is 
\begin{equation*}
\begin{pmatrix}
 -dR_{2}& -\frac{\beta_1}{\beta_2}\left(d_2+\gamma_2\right) & -\left(d_2+\gamma_2\right)& 0\\
0& \left(d_1+\gamma_1\right)\left(\frac{R_{1}}{R_{2}}-1\right)  & 0& 0\\
d\left(R_{2}-1\right)&0 & 0 & d\left(R_{2}-1\right) \\
0&\gamma_1e^{-\lambda\tau} & 0 & -dR_{2}
\end{pmatrix}.
\end{equation*}
The corresponding characteristic polynomial factors to

\begin{equation*}
   h(\lambda) = \left(\left(d_1+\gamma_1\right)\left(\frac{R_{1}}{R_{2}}-1\right)-\lambda\right)\left(-dR_{2}-\lambda\right)\left(d\left(d_2+\gamma_2\right)\left(R_{2}-1\right) + dR_{2}\lambda +\lambda^2\right).
\end{equation*}
Therefore, the corresponding roots are
\begin{equation*}
\begin{aligned}
    \lambda_1 = & \left(d_1+\gamma_1\right)\left(\frac{R_{1}}{R_{2}}-1\right),\\
    \lambda_2 = & -dR_{2},
\end{aligned}
\end{equation*}
and the roots to the quadratic equation 
$g(\lambda) = d\left(d_2+\gamma_2\right)\left(R_{2}-1\right) + dR_{2}\lambda +\lambda^2.$
Consequently, $\lambda_1<0$ by assumption (1) and $\lambda_2<0$. In addition, by the Routh-Hurwitz stability criterion for quadratic equations \citep{brauer_2012}, $g(\lambda)$ has roots with negative real parts since $R_{2}>1$. Thus all eigenvalues have negative real part.
Lastly, we see that $E_2$ is unstable if and only if $\lambda_{1}>0$, that is, $\frac{R_{1}}{R_{2}}<1$. This concludes the proof.

\end{proof}
\end{proposition}
\begin{remark}
From the above local stability results we see that for $\tau\geq0$, the equilibria $E_0$, $E_1$, and $E_2$ do not undergo a delay-induced stability switch. For this reason, we call the delay, $\tau$, a harmless delay \citep{Gopalsamy_1983,Driver_72}. We summarize this formally in the next theorem.
\end{remark}
\begin{proposition}
For $\tau\geq 0$, the equilibria $E_0$, $E_1$, and $E_2$ do not undergo a delay-induced stability switch.
\begin{proof}
We see that by Propositions \ref{LAS E0}, \ref{LAS E1}, and \ref{LAS E2}, $\tau$ does not appear in any of the characteristic polynomials and therefore does not influence stability.
\end{proof}
\end{proposition}

\begin{theorem}
\label{thm: GS E2}If $R_1<1$ and $R_2>1$, then $E_2$ is globally asymptotically stable.
\end{theorem}
\begin{proof}
  Since $R_1<1$, $E_1$ and $E_c$ do not exist. By assumption $\beta_1\frac{a}{d}<\gamma_1+d_1$, thus there exists $\varepsilon>0$ such that $\beta_1\left(\frac{a}{d}+\varepsilon\right)<\gamma_1+d_1$. Since $S\leq \frac{a}{d}+\left(S(0)-\frac{a}{d}\right)e^{-dt}$ there exists $t_\varepsilon>0$ such that for $t>t_{\varepsilon}$ we have $S<\frac{a}{d}+\varepsilon$. Hence,

\begin{equation}
\begin{aligned}
I_1'(t)&=\beta_1SI_1-\left(\gamma_1+d_1\right)I_1\\
     & < \left[\beta_1\left(\frac{a}{d}+\varepsilon\right)-\left(\gamma_1+d_1\right)\right]I_1\\
\end{aligned}
\end{equation}
and therefore $I_1(t)<I_1(0)e^{(\beta_1\left(\frac{a}{d}+\varepsilon\right)-(\gamma_1+d_1)})t$, and so $I_1(t)\rightarrow 0$ as $t\rightarrow\infty$. Since $I_1\rightarrow 0$, for $\varepsilon_1>0$ there exists $t_1$ such that $\gamma_1I_1(t-\tau)<\varepsilon_1$ for $t>t_1.$ Therefore,
\begin{equation}
\begin{aligned}
R_l'(t)&=\gamma_1I_1(t-\tau) -\beta_1I_2R_l-dR_l\\
     & < \varepsilon_1 -dR_l\\
\end{aligned}
\end{equation}
which implies that $\limsup_{t\to\infty}R_l(t)\leq \frac{\varepsilon_1}{d}.$ Letting $\varepsilon_1\to0$ we obtain $\limsup_{t\to\infty}R_l(t)\leq 0.$ In addition, since $R_l\geq0$ we have $\liminf_{t \to\infty} R_l\geq0$. Therefore, $\lim_{t\to\infty}R_l(t)=0$ and we obtain the 2 dimensional limiting system: 
\begin{equation}
\begin{aligned}
S' & =  a - dS -\beta_2SI_2\\
I_2' & = \beta_2SI_2 - \left(\gamma_2 + d_2\right)I_2.
\end{aligned}
\label{eqn: limiting system}
\end{equation}
Let $S^*=\frac{\gamma_2+d_2}{\beta_2}$ and $I_2^*=\frac{a}{\beta_2}\left(r_{2} - \frac{d}{a}\right)$ and consider the following Lyapunov function 
\begin{equation}
    V(S,I_2)=S-S^*\ln(S)+I_2-I_2^*\ln(I_2).
\end{equation}
Then the derivative with respect to time is given by

\begin{align*}
    \dot{V}&= \left(1-\frac{S^*}{S}\right)\left(a-dS-\beta_2SI_2\right) + \left(1-\frac{I_2^*}{I_2}\right)\left(\beta_2S-\left(\gamma_2+d_2\right)\right)I_2 \\
     &=-\frac{1}{S}\left(S-S^*\right)\left(d(S-S^*) +\beta_2S(I_2-I_2^*) +\beta_2 I^*_2(S-S^*)\right) + \beta_2\left(I_2 - I_2^*\right)\left(S-S^*\right)\\
     &=-\frac{d}{S}\left(S-S^*\right)^2-\frac{\beta_2 I_2^*}{S}\left(S-S^*\right)^2\\
     &\leq0.
\end{align*}

We have used the steady state relationships $\gamma_2+d_2=\beta_2 S^*$ and $a=dS^*+\beta_2 S^*I_2^*$. Thus we have a Lyapunov function. We have that $E=\{(S,I_2)|\dot{V}(S,I_2)=0\}=\{(S^*,I_2)| I_2>0 \}$. Let the largest invariant set of $E$ be $M$. Since $S(t)=S^*,$ we have $S'(t)=0=a-dS^*-\beta_2 I_2S^*$ which implies that $I_2=I_2^*$.
Hence the largest invariant set of $E$ is $$M=\{ (S^*, I_2^*) \}=\Biggl\{\left(\frac{\gamma_2+d_2}{\beta_2},\frac{d}{\beta_2}\left(R_{2} - 1\right)\right)\Biggr\}.$$
Thus all solutions of system \eqref{eqn: limiting system} tend to $\left(\frac{\gamma_2+d_2}{\beta_2},\frac{d}{\beta_2}\left(R_{2} - 1\right)\right)$, by the Lyapunov-LaSalle Theorem. This shows that all solutions to system \eqref{eqn: model1} tend to $E_2$ if $R_1<1$ and $R_2>1$.
\end{proof}
\end{theorem}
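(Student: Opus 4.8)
The plan is to show that, under the stated hypotheses, strain~2 is asymptotically driven to extinction, so that the full delay system collapses onto the classical two-dimensional $S$--$I_1$ subsystem whose unique interior equilibrium $E^{*}=\left(\tfrac{\gamma_1+d_1}{\beta_1},\tfrac{d}{\beta_1}(R_1-1)\right)$ is globally attracting; reconstructing the limiting value of $R_l$ then identifies the full limit as $E_1$. The heart of the argument is a carefully ordered cascade of $\limsup$ estimates that controls the extra susceptibility $R_l$ feeding strain~2, showing that it is the \emph{augmented} threshold $\beta_2(a/d+D)<\gamma_2+d_2$, rather than the bare condition $R_2<1$, that forces $I_2\to0$.

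First I would bound the relevant compartments from above in sequence. From the $S$-equation, $S'\le a-dS$, so $\limsup_{t\to\infty}S\le a/d$; and since $S(0)+I_1(0)\le a/d$, together with $(S+I_1)'\le a-d(S+I_1)$ (dropping the nonnegative $\beta_2SI_2$ term and using $d_1\ge d$, so that $\gamma_1+d_1\ge d$), comparison yields $S(t)+I_1(t)\le a/d$ for all $t$. The arithmetic--geometric mean inequality then gives $SI_1\le (a/2d)^2=B$, so the $I_1$-equation yields $I_1'\le \beta_1 B-(\gamma_1+d_1)I_1$ and hence $\limsup_{t\to\infty}I_1\le \beta_1 B/(\gamma_1+d_1)=C$. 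Feeding this into the $R_l$-equation, for any small $\varepsilon>0$ we eventually have $\gamma_1 I_1(t-\tau)\le \gamma_1(C+\varepsilon)$, so $R_l'\le \gamma_1(C+\varepsilon)-dR_l$ and therefore $\limsup_{t\to\infty}R_l\le \gamma_1 C/d=D$.

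Next I would drive $I_2$ to zero. Writing the $I_2$-equation as $I_2'=[\beta_2(S+R_l)-(\gamma_2+d_2)]I_2$, the preceding estimates give, for all large $t$, $S+R_l\le a/d+D+\delta$ with $\delta$ arbitrarily small. The hypothesis $\beta_2(a/d+D)<\gamma_2+d_2$ leaves strict room to absorb $\delta$ (choosing the nested tolerances commensurate, e.g.\ $\varepsilon_0=\tfrac{\gamma_1}{d}\varepsilon$), so the bracket is eventually negative and bounded away from zero; hence $I_2$ decays exponentially and $I_2\to0$. This is the step I expect to require the most care, precisely because $R_l$ is not \emph{a priori} small: only its long-run bound $D$ is available, and the whole cascade must be closed in the right order so that the worst-case contribution of $R_l$ to strain~2 still keeps its effective growth rate subcritical. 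I would also note that $R_2<1$ is merely the weaker consequence $\beta_2 a/d<\gamma_2+d_2$ of this hypothesis, while $R_2<\tfrac{a\beta_1}{d_1d+\gamma_1dR_1}$ is exactly the condition excluding $E_c$ and giving local stability of $E_1$ via Proposition~\ref{LAS E1}, so the target equilibrium is indeed the correct one.

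Finally, once $I_2\to0$ the variable $R_l$ no longer influences $S$ and $I_1$, and I would invoke the theory of asymptotically autonomous (limiting) systems to reduce to
\begin{align*}
S'&=a-dS-\beta_1 S I_1,\\
I_1'&=\beta_1 S I_1-(\gamma_1+d_1)I_1.
\end{align*}
Since $R_1>1$, this planar system has the interior equilibrium $E^{*}$, whose global asymptotic stability is classical, for instance via the logarithmic Lyapunov function $S-S^{*}\ln S+I_1-I_1^{*}\ln I_1$ with LaSalle's invariance principle, exactly as carried out for $E_2$ in the two-dimensional limiting system. With $S\to(\gamma_1+d_1)/\beta_1$ and $I_1\to \tfrac{d}{\beta_1}(R_1-1)$, the limiting $R_l$-equation $R_l'=\gamma_1 I_1^{*}-dR_l$ forces $R_l\to\tfrac{\gamma_1}{\beta_1}(R_1-1)$, matching the $R_l$-component of $E_1$. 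Therefore every trajectory of system~\eqref{eqn: model1} converges to $E_1$, which is the claim.
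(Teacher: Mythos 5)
Your proposal is correct and follows essentially the same route as the paper's proof: the same cascade of bounds ($S+I_1\le a/d$ via comparison, $SI_1\le B$ by AM--GM, $\limsup I_1\le C$, $\limsup R_l\le D$), the same use of the augmented threshold $\beta_2(a/d+D)<\gamma_2+d_2$ with nested tolerances to force exponential decay of $I_2$, and the same reduction to the planar $S$--$I_1$ limiting system followed by reconstruction of $R_l$. Your explicit invocation of asymptotically autonomous systems theory and the logarithmic Lyapunov function merely fills in the paper's ``it's easy to see'' step (mirroring the Lyapunov argument the paper gives for $E_2$), so no substantive difference remains.
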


\begin{figure}[!htp]
\includegraphics[width=\textwidth]{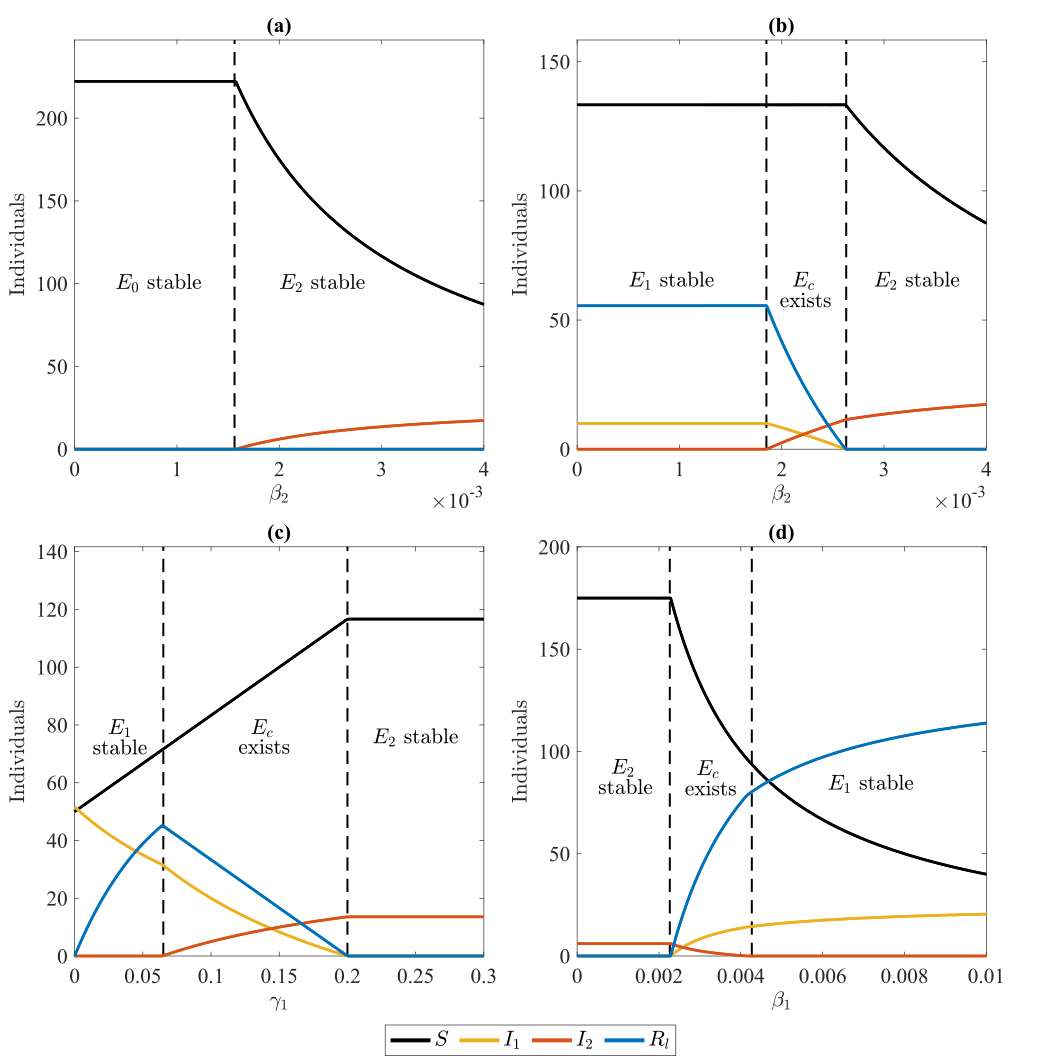}
\caption{{\small \textbf{(a)}: Bifurcation from disease free equilibrium ($E_0$) to dominance by strain 2 ($E_2$) using $\beta_2$ as a bifurcation parameter and where $\beta_1 = 0.0003$.  \textbf{(b)}: Bifurcation from strain 1 dominance ($E_1$) to coexistence ($E_c$) and finally dominance by strain 2 ($E_2$) using $\beta_2$ as a bifurcation parameter and where $\beta_1 = 0.003$. \textbf{(c)} Bifurcation from strain 1 dominance ($E_1$) to coexistence ($E_c$) and finally dominance by strain 2 ($E_2$) using $\gamma_1$ as a bifurcation parameter and where $\beta_1 = \beta_2 = 0.003$. \textbf{(d)} Bifurcation from strain 2 dominance ($E_2$) to coexistence ($E_c$) and then to dominance by strain 1 ($E_1$)  using $\beta_1$ as a bifurcation parameter where $\beta_2=.002$. All other parameter values are $\gamma_1=0.25$, $\gamma_2=0.2$, $d=.045$, $d_1=d_2=0.15$, $a=10$ and $\tau=0$.}  }
    \label{fig: bifurcation full model}
\end{figure}

\begin{figure}[!htbp]
\includegraphics[trim = 40mm 0mm 40mm 0mm, clip,width=\textwidth]{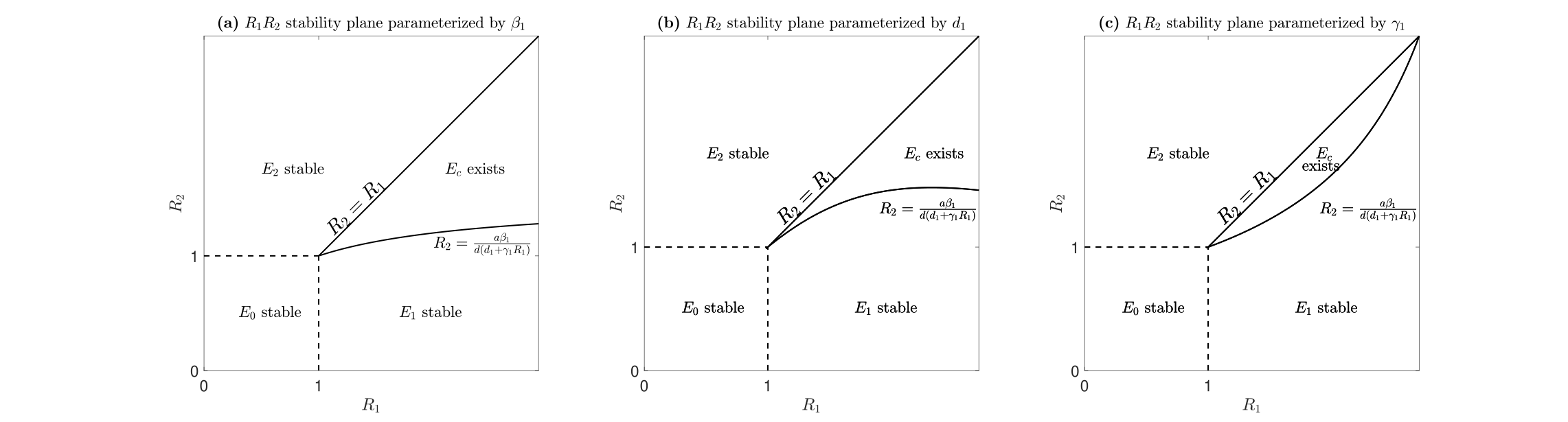}
    \caption{System \eqref{eqn: model1} stability and existence regions of equilibria in the $R_1R_2$ plane. (a) Parameterized by $\beta_1$. (b) Parameterized by $d_1$. (c) Parameterized by $\gamma_1$. For example, starting in the $E_0$ stability region and increasing $\beta_2$ ultimately produces a bifurcation as $E_0$ loses stability and $E_2$ gains stability. This is also illustrated with panel (a) of \autoref{fig: bifurcation full model}. Starting in the $E_1$ stability region and increasing $\beta_2$ produces a bifurcation as $E_1$ loses stability, $E_c$ appears and ultimately for higher $\beta_2$ values $E_2$ becomes stable. See panel (b) of \autoref{fig: bifurcation full model} for the bifurcation diagram. We note that the curve given by \eqref{eqn: R2 curve} can only be plotted as a function of $\beta_1, d_1$, $\gamma_1$, $a$ and $d$. We only show plots for the first three since the latter two have similar geometries.}
    \label{fig:R1R2 plane}
\end{figure}

\begin{figure}[htp]
    \includegraphics[width=\textwidth]{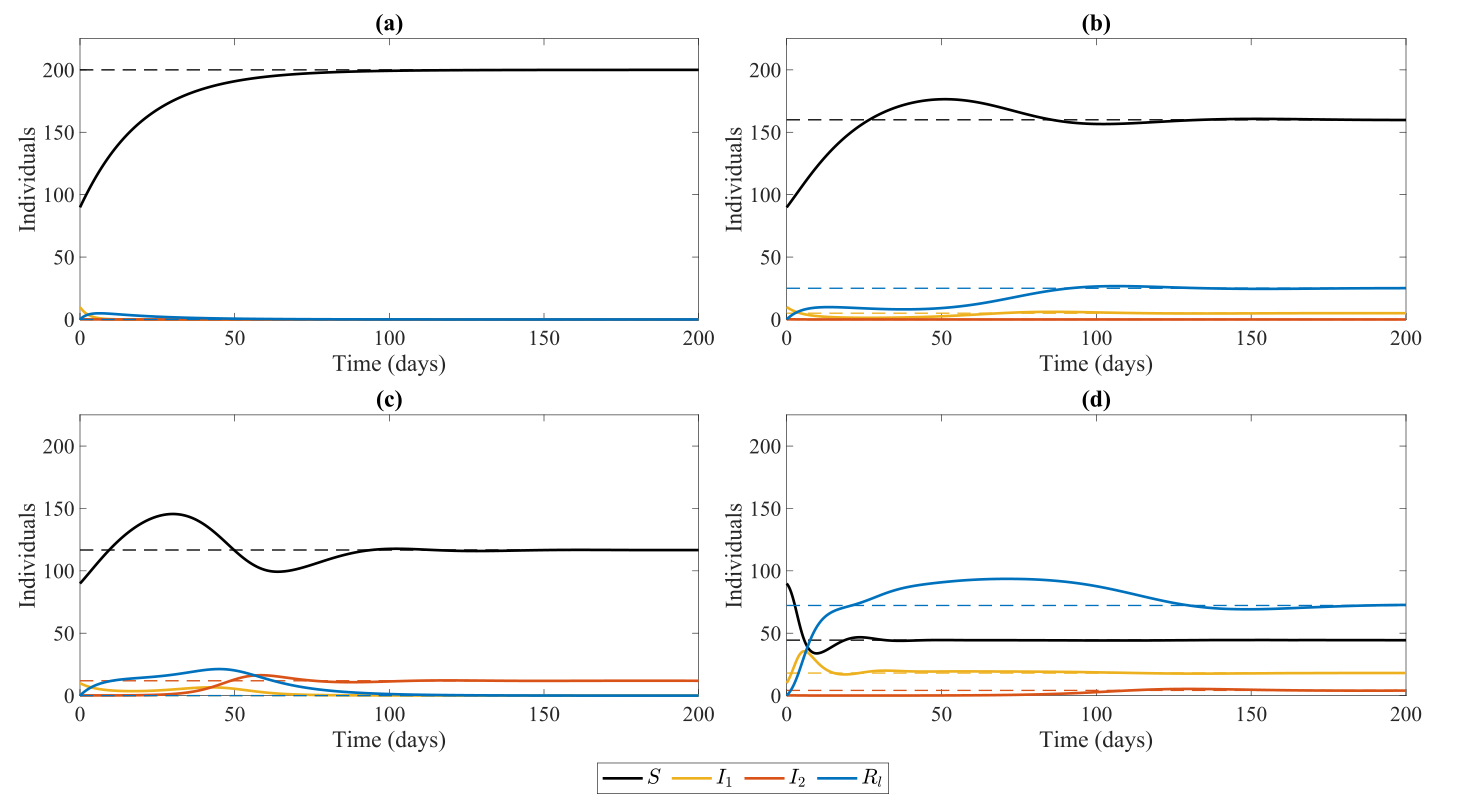}
    \caption{Model solutions illustrating the 4 equilibrium points of system \eqref{eqn: model1}. (a) disease free ($E_0$ stability region). Parameter values: $\beta_1=\beta_2=0.0003$. (b) strain 1 outcompetes strain 2 ($E_1$ stability region). $\beta_1=0.0025$ and $\beta_2=0.0003$. (c) strain 2 outcompetes strain 1 ($E_2$ stability region). $\beta_1=\beta_2=0.003$. (d) coexistence ($E_c$ existence region). $\beta_1=.009$ and $\beta_2=0.003$. Dashed lines represent equilibrium solutions (see equations \eqref{eqn: E_0},  \eqref{eqn: E_1}, \eqref{eqn: E_2} and \eqref{eqn: E_c}). The rest of the parameter values were fixed at $\gamma_1=0.25$, $\gamma_2=0.2$, $d=.05$, $d_1=0.15$, $d_2=0.15$, $a$=10 and $\tau=0$.}
    \label{fig:model equil}
\end{figure}

\autoref{fig:R1R2 plane} shows the general stability and existence regions of the equilibrium points of system \eqref{eqn: model1} in the $R_1R_2$ plane. Bifurcations occur when crossing from one region to another. To generate these diagrams we parameterize the curve \eqref{eqn: R2 curve} by either $\beta_1$, $d_1$ and $\gamma_1$. We illustrate the bifurcation from $E_0$ to $E_2$ in \autoref{fig: bifurcation full model} (panel (a)), the bifurcations from $E_1$ to $E_c$ to $E_2$ in (panel (b) and (c)), and the bifurcations from $E_2$ to $E_c$ to $E_1$ in (panel (d)).

We see that the competitive exclusion principle holds for either strain as long as the conditions of either Proposition \ref{LAS E1} or Proposition \ref{LAS E2} hold \citep{Gause1934,Bremermann1989}. However, conditions for one strain to competitively exclude the other are different between the two strains because of temporary cross-immunity. This also suggests that temporary cross-immunity is a mechanism for coexistence of two competing virus strains.

\section{The transient model}\label{sec:tran_model}
To study the stability of the coexistence steady state we make the assumption that the susceptible population is at equilibrium, $S(t)=S_{M}$ and remove the differential equation for $S$. Furthermore, if $I_1\neq0$, then $S_{M}=\frac{\gamma_1+d_1}{\beta_1}$. Therefore, $\frac{dI_1}{dt}=0$ and we may remove the equation for $\frac{dI_1}{dt}$, but assume $I_1(0)>0$. We have the following 2 dimensional system of differential equations:

\begin{equation}
\begin{aligned}
\frac{dI_2}{dt} & =  \beta_2 S_{M} I_2 +\beta_2 R_l I_2 -\gamma_2 I_2 -d_2I_2 &\\
\frac{dR_l}{dt} & = \gamma_1 I_1(0) - \beta_2 R_l I_2 - dR_l.
\end{aligned}
\label{eqn: trans_sys}
\end{equation} The reproduction numbers for the transient model are:

\begin{equation}
    \tilde{R}_1 = \frac{\beta_1}{d_1+\gamma_1} \text{    and     } \tilde{R}_2 = \frac{\beta_2}{d_2+\gamma_2} ,\text{respectively}.
\end{equation}
\subsection{Boundedness and positivity}
We prove basic positivity and boundedness of solutions for \eqref{eqn: trans_sys}. However, we note that if $\frac{\tilde{R}_1}{\tilde{R}_2}<1$, $I_2(t)$ becomes unbounded.
\begin{proposition}
Solutions to system \eqref{eqn: trans_sys} that start positive, remain positive for all time.
\label{th: positivity 2D}

\begin{proof}
Let $I_2(0)>0$ and $R_l(0)>0$. We proceed by way of contradiction. That is, supposed there exists $t_1>0$ where either $I_2(t_1)=0$ or $R_l(t_1)=0$ for the first time. Then for $t\in \left[0,t_1\right]$, we have that $I_2(t)>0$ and $R_l(t)>0$. We proceed by cases:

\textbf{Case 1:} $I_2(t_1)=0$

For $t\in\left[0,t_1\right]$, we have 
\begin{equation}
\begin{aligned}
I_2'(t)&\geq -\gamma_2I_2 - d_2I_2\\
     & = -\left(\gamma_2+d_2\right)I_2.
\end{aligned}
\end{equation}
This implies that 
$$I_2(t)\geq I_2(0)e^{-(\gamma_2+d_2)t}>0$$
Therefore, $I_2(t_1)>0$, a contradiction.

\textbf{Case 2:} $R_l(t_1)=0$

For $t\in\left[0,t_1\right]$, we have 
\begin{equation}
\begin{aligned}
R_l'(t)&\geq -\beta_2R_lI_2-dR_l\\
     & \geq -\beta_2R_l\alpha-dR_l.
\end{aligned}
\end{equation}
where \[\alpha:=\max_{t\in[0,t_1]}\{I_2(t)\}.\] Then \begin{equation}
\begin{aligned}
R_l'(t)&\geq -\beta_2R_lI_2-dR_l\\
     & \geq -\left(\beta_2\alpha+d\right)R_l.
\end{aligned}
\end{equation}
This implies that 
$$R_l(t)\geq R_l(0)e^{-(\beta_2\alpha+d)t}>0$$
Therefore, $R_l(t_1)>0$, a contradiction.
\end{proof}
\end{proposition}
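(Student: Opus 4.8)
The plan is to argue by contradiction using the standard ``first time a component vanishes'' technique combined with integrating-factor lower bounds, exactly the device already used in the proof of Proposition~\ref{th: full positivity} for the full model. The structural feature to exploit is that each right-hand side of \eqref{eqn: trans_sys}, when restricted to the region where both components are strictly positive, admits a clean linear lower bound once the nonnegative terms are discarded. Concretely, I would first suppose positivity fails; then by continuity of the solution there is a first time $t_1>0$ at which one component reaches zero while both are strictly positive on $[0,t_1)$, and I would split into two cases according to which component vanishes at $t_1$.

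In the case $I_2(t_1)=0$, I would rewrite the first equation as $I_2'=(\beta_2 S_M+\beta_2 R_l-\gamma_2-d_2)I_2$ and, using that $\beta_2 S_M I_2\ge 0$ and $\beta_2 R_l I_2\ge 0$ on $[0,t_1]$, bound it below by $I_2'\ge -(\gamma_2+d_2)I_2$; the integrating factor then gives $I_2(t)\ge I_2(0)e^{-(\gamma_2+d_2)t}>0$, so $I_2(t_1)>0$, a contradiction. In the case $R_l(t_1)=0$, I would discard the nonnegative production term $\gamma_1 I_1(0)\ge 0$ to get $R_l'\ge -\beta_2 I_2 R_l-dR_l$, and then replace the variable coefficient by a constant using that $I_2$ is continuous, hence bounded on the compact interval $[0,t_1]$, say by $\alpha:=\max_{t\in[0,t_1]}I_2(t)$. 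This yields $R_l'\ge -(\beta_2\alpha+d)R_l$, and the integrating factor again gives $R_l(t)\ge R_l(0)e^{-(\beta_2\alpha+d)t}>0$, contradicting $R_l(t_1)=0$.

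The step I expect to be the only (mild) obstacle is the $R_l$ case: unlike the $I_2$ equation, the $R_l$ equation is not homogeneous in $R_l$ alone, so one cannot integrate it directly as a linear equation in a single variable. The remedy is precisely the compactness bound $\alpha$ on $I_2$, which converts the coupled inequality into a genuine constant-coefficient linear differential inequality; this bound is available because we work on the closed interval $[0,t_1]$, where $I_2$ has not yet had a chance to blow up (recall the remark preceding the statement that $I_2$ may become unbounded when $\tilde R_1/\tilde R_2<1$, so positivity must be read on the maximal interval of existence). An alternative, slightly slicker route worth noting is a Nagumo-type tangency argument: at $R_l=0$ the field satisfies $R_l'=\gamma_1 I_1(0)>0$ and hence points strictly inward, while $\{I_2=0\}$ is invariant since $I_2$ factors out of its own equation, so the open positive quadrant is positively invariant. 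I would nonetheless present the contradiction argument, as it needs no invariance machinery and mirrors the earlier positivity proof.
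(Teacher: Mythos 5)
Your proposal is correct and follows essentially the same route as the paper's proof: a contradiction at the first vanishing time, with the lower bound $I_2'\geq-(\gamma_2+d_2)I_2$ in the first case and the compactness bound $\alpha=\max_{t\in[0,t_1]}I_2(t)$ converting the $R_l$ inequality into a constant-coefficient one in the second. The Nagumo-type remark is a nice aside but not needed.
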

\begin{proposition}
\label{prop: trans bounded}
If $\frac{\tilde{R}_1}{\tilde{R}_2}>1$, then solutions of system \eqref{eqn: trans_sys} are bounded from above.
\label{th: boundedness 2D}\begin{proof}
Assume that $\frac{\tilde{R}_1}{\tilde{R}_2}>1$, then $\gamma_2+d_2-\beta_2S_M>0$. Let $z=I_2+R_l$, $z_0=I_2(0) + R_l(0)$ and $\alpha = \min\{\gamma_2+d_2-\beta_2S_M,d\}$. Then \begin{equation*}
\begin{aligned}
    z'(t)&= \gamma_1I_1(0) -\left(\gamma_2+d_2-\beta_2S_M\right)I_2 -dR_l\\
     &\leq  \gamma_1I_1(0) - \alpha z. 
\end{aligned}
\end{equation*}
This implies $$z(t)\leq \frac{\gamma_1 I_1(0)}{\alpha } + \left(z_0 -\frac{\gamma_1I_1(0)}{\alpha}\right)e^{-\alpha t}$$and $$\limsup_{t\rightarrow\infty}z(t)\leq
\frac{\gamma_1 I_1(0)}{\alpha}.$$Therefore, $$z(t)\leq
\max\left\{z_0,\frac{\gamma_1 I_1(0)}{\alpha}\right\}=:B.$$Thus we have $I_2+R_l\leq B$. Since $I_2>0$ and $R_l>0$, we have that both $I_1$ and $R_l$ are bounded above.
\end{proof}
\end{proposition}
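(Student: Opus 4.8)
The plan is to exploit the special algebraic structure of the nonlinear cross-immunity terms. The interaction term $\beta_2 R_l I_2$ appears with a plus sign in the $I_2$ equation and a minus sign in the $R_l$ equation of system \eqref{eqn: trans_sys}, so it cancels exactly upon addition. This suggests that rather than trying to bound $I_2$ and $R_l$ separately (where the nonlinearity obstructs a direct estimate), I should work with the combined variable $z = I_2 + R_l$, whose dynamics are governed by a \emph{linear} differential inequality.

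First I would translate the hypothesis into a usable sign condition. Since $S_M = (\gamma_1+d_1)/\beta_1$, a direct computation gives $\tilde{R}_1/\tilde{R}_2 = \beta_1(\gamma_2+d_2)/\bigl[\beta_2(\gamma_1+d_1)\bigr]$, so the assumption $\tilde{R}_1/\tilde{R}_2 > 1$ is equivalent to $\gamma_2 + d_2 - \beta_2 S_M > 0$. This is precisely the condition that makes the coefficient of $I_2$ in the equation for $z'$ strictly negative, which is the crux of the whole argument.

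Next I would differentiate $z$. Adding the two equations of \eqref{eqn: trans_sys}, the $\pm\,\beta_2 R_l I_2$ terms cancel, leaving
$$
z'(t) = \gamma_1 I_1(0) - \bigl(\gamma_2 + d_2 - \beta_2 S_M\bigr) I_2 - d R_l.
$$
Setting $\alpha = \min\{\gamma_2 + d_2 - \beta_2 S_M,\, d\}$, which is strictly positive by the previous step, each negative term is bounded above by $-\alpha$ times its variable, yielding the scalar linear inequality $z'(t) \leq \gamma_1 I_1(0) - \alpha z$. Finally I would integrate this via the comparison principle (equivalently, an integrating factor on the associated linear ODE) to obtain $z(t) \leq \frac{\gamma_1 I_1(0)}{\alpha} + \bigl(z_0 - \frac{\gamma_1 I_1(0)}{\alpha}\bigr)e^{-\alpha t}$, hence $z(t) \leq \max\{z_0,\, \gamma_1 I_1(0)/\alpha\} =: B$. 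Because Proposition \ref{th: positivity 2D} guarantees $I_2 > 0$ and $R_l > 0$, each of $I_2$ and $R_l$ is bounded above by $z = B$.

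There is no deep obstacle here: once the cancellation is observed, the estimate is a routine Gr\"onwall/comparison argument. The only point requiring care is the sign translation of the hypothesis, and it is worth emphasizing \emph{why} it cannot be dropped. If instead $\tilde{R}_1/\tilde{R}_2 < 1$, then $\gamma_2 + d_2 - \beta_2 S_M < 0$ and the $I_2$ term in $z'$ becomes a \emph{growth} term, so the linear inequality collapses and $z$ need not remain bounded; this is exactly the unboundedness of $I_2$ flagged in the remark preceding the statement.
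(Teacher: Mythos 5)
Your proposal is correct and follows essentially the same route as the paper's own proof: the cancellation of the $\pm\beta_2 R_l I_2$ terms in $z = I_2 + R_l$, the translation of $\tilde{R}_1/\tilde{R}_2 > 1$ into $\gamma_2 + d_2 - \beta_2 S_M > 0$, the choice $\alpha = \min\{\gamma_2+d_2-\beta_2 S_M,\, d\}$, and the comparison estimate $z' \leq \gamma_1 I_1(0) - \alpha z$ are all identical. Your added remarks on why the sign condition is essential (connecting to the unboundedness of $I_2$ when $\tilde{R}_1/\tilde{R}_2 < 1$) are accurate and consistent with the paper's subsequent proposition.
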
Lastly, we see that when $\frac{\tilde{R}_1}{\tilde{R}_2}<1$, then solutions for $I_2$ are unbounded.
\begin{proposition}
\label{thm: I2 unbounded}
If $\frac{\tilde{R}_1}{\tilde{R}_2}<1$ then $I_2$ from system \eqref{eqn: trans_sys} is unbounded.
\begin{proof}
We have $\frac{\tilde{R}_1}{\tilde{R}_2}<1$, then
\begin{equation*}
\begin{aligned}
    I_2'(t)&= \left(\beta_2S_M - \gamma_2 -d_2\right)I_2 +\beta_2R_lI_2 \\
     &\geq \left(\beta_2S_M - \gamma_2 -d_2\right)I_2\\
     &=\frac{\beta_2}{\tilde{R}_1}\left(1-\frac{\tilde{R}_1}{\tilde{R}_2}\right)I_2>0.
\end{aligned}
\end{equation*}
This implies that $I_2(t)$ is unbounded for all $t>0$.
\end{proof}
\end{proposition}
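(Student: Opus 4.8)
The plan is to show that $I_2$ obeys a linear differential inequality with a strictly positive growth rate, from which unbounded exponential growth follows at once. The first step is to express the linear part of the $I_2$ equation in terms of the transient reproduction numbers. Noting that $S_M = (\gamma_1+d_1)/\beta_1 = 1/\tilde{R}_1$ and $\gamma_2 + d_2 = \beta_2/\tilde{R}_2$, the coefficient of $I_2$ in the reaction term $\beta_2 S_M I_2 - (\gamma_2+d_2)I_2$ rewrites as
\begin{equation*}
\beta_2 S_M - (\gamma_2 + d_2) = \frac{\beta_2}{\tilde{R}_1} - \frac{\beta_2}{\tilde{R}_2} = \frac{\beta_2}{\tilde{R}_1}\left(1 - \frac{\tilde{R}_1}{\tilde{R}_2}\right),
\end{equation*}
which is strictly positive precisely under the hypothesis $\tilde{R}_1/\tilde{R}_2 < 1$.

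Next I would discard the coupling term to obtain a clean lower bound. By Proposition \ref{th: positivity 2D}, solutions starting positive remain positive, so $R_l(t) > 0$ and hence the feedback term $\beta_2 R_l I_2$ is nonnegative for all $t$. Dropping it from the right-hand side of the $I_2$ equation yields the differential inequality
\begin{equation*}
I_2'(t) \;\geq\; \bigl(\beta_2 S_M - \gamma_2 - d_2\bigr) I_2 \;=\; \frac{\beta_2}{\tilde{R}_1}\left(1 - \frac{\tilde{R}_1}{\tilde{R}_2}\right) I_2.
\end{equation*}
A standard comparison with the associated linear ODE then gives $I_2(t) \geq I_2(0)\exp\!\bigl[\tfrac{\beta_2}{\tilde{R}_1}(1 - \tilde{R}_1/\tilde{R}_2)\,t\bigr]$, and since the exponent is positive this lower bound grows without bound, establishing that $I_2$ is unbounded.

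There is no substantive obstacle here; the only point requiring care is recognizing that, in contrast to the boundedness argument of Proposition \ref{prop: trans bounded}, the sign of $\beta_2 S_M - (\gamma_2+d_2)$ now favors growth rather than decay. Consequently the strictly positive $R_l$-feedback cannot be used to recover an upper bound and is instead simply discarded to produce the lower bound above. I would therefore keep the argument entirely at the level of the scalar inequality for $I_2$, with positivity of $R_l$ supplied by Proposition \ref{th: positivity 2D}, rather than attempting to analyze the coupled two-dimensional flow.
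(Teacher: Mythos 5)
Your proposal is correct and follows essentially the same route as the paper: rewrite $\beta_2 S_M - (\gamma_2+d_2)$ as $\tfrac{\beta_2}{\tilde{R}_1}\left(1-\tfrac{\tilde{R}_1}{\tilde{R}_2}\right)$, drop the nonnegative $\beta_2 R_l I_2$ term, and conclude growth from the resulting linear differential inequality. If anything, your version is slightly more careful than the paper's, since you make the exponential comparison $I_2(t) \geq I_2(0)e^{ct}$ explicit rather than inferring unboundedness directly from $I_2' > 0$.
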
An interesting implication of Proposition \ref{thm: I2 unbounded} is that if $I_1$ is not as infectious relative to strain 2, then it cannot control the spread of strain 2 and ultimately strain 2 becomes unbounded in the transient model.
\subsection{Equilibria of the transient system}
For our analysis we would like to have bounded solutions. For this to hold, by Proposition \ref{th: boundedness 2D} we must have that $\frac{\tilde{R}_1}{\tilde{R}_2}>1$. Therefore, for the remainder of this section we assume that $\frac{\tilde{R}_1}{\tilde{R}_2}>1$. 

Assuming that $I_1(0)> 0$, we find two equilibria: the coexistence equilibria, $U_c$, and another equilibrium where the first strain exists, $U_1$. They take the following form:
 
\begin{equation}
    U_c = \left( \frac{\beta_1\gamma_1I_1(0)-d\left(d_1+\gamma_1\right)\left(\frac{\tilde{R}_{1}}{\tilde{R}_{2}}-1\right)}{
    \beta_2\left(d_1+\gamma_1\right)\left(\frac{\tilde{R}_{1}}{\tilde{R}_{2}}-1\right)},\frac{d_1+\gamma_1}{\beta_1}\left(\frac{\tilde{R}_1}{\tilde{R}_2}-1\right)\right)
\end{equation}  
\vspace{.5cm}
\begin{equation}
    U_1 = \left( 0,\frac{\gamma_1 I_1(0)}{d}\right).
\end{equation}  
We note that $U_c$ is dependent on $I_1(0)$ and is biologically relevant exactly when \begin{equation}
    \beta_1\gamma_1I_1(0)-d\left(d_1+\gamma_1\right)\left(\frac{\tilde{R}_1}{\tilde{R}_2}-1\right)>0. 
    \label{eqn: U_c existence}
\end{equation}  We note that $U_c$ can exist even when both reproduction numbers are less than 1. We have the following theorem on the stability of $U_c$.
\begin{proposition}
If $U_c$ exists, then it is asymptotically stable.
\label{th: Uc LAS}
\begin{proof}
$U_c$ is a positive steady state if and only if $$\beta_1\gamma_1I_1(0)-d\left(d_1+\gamma_1\right)\left(\frac{\tilde{R}_1}{\tilde{R}_2}-1\right)>0.$$
The Jacobian matrix at $U_c$ is

\begin{equation*}
\begin{pmatrix}
0 & \frac{\beta_1\gamma_1I_1(0)-d\left(d_1+\gamma_1\right)\left(\frac{\tilde{R}_1}{\tilde{R}_2}-1\right)}{
\left(d_1+\gamma_1\right)\left(\frac{\tilde{R}_1}{\tilde{R}_2}-1\right)}\\
-\frac{\beta_2}{\beta_1}\left(d_1+\gamma_1\right)\left(\frac{\tilde{R}_1}{\tilde{R}_2}-1\right) & \frac{-\beta_1\gamma_1I_1(0)}{\left(d_1+\gamma_1\right)\left(\frac{\tilde{R}_1}{\tilde{R}_2}-1\right)}
\end{pmatrix}.
\end{equation*}
We find that the trace is
\begin{equation*}
\frac{-\beta_1\gamma_1I_1(0)}{\left(d_1+\gamma_1\right)\left(\frac{\tilde{R}_1}{\tilde{R}_2}-1\right)} <0
\end{equation*}
and determinant is
\begin{equation*}
\frac{\beta_2}{\beta_1}\left(\beta_1\gamma_1I_1(0) - d\left(d_1+\gamma_1\right)\left(\frac{\tilde{R}_1}{\tilde{R}_2}-1\right)\right)>0.
\end{equation*}
Therefore, both eigenvalues have negative real part and $U_c$ is locally asymptotically stable whenever it exists.
\end{proof}
\end{proposition}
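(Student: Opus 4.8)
The plan is to linearize the planar system \eqref{eqn: trans_sys} about $U_c$ and apply the standard trace--determinant criterion: for an autonomous $2\times 2$ system, an equilibrium is locally asymptotically stable precisely when the Jacobian has negative trace and positive determinant, since these two conditions are equivalent to both eigenvalues having negative real part (this is just Routh--Hurwitz for quadratics, already invoked for $g(\lambda)$ in Propositions \ref{LAS E1} and \ref{LAS E2}). Because the reduced system is only two-dimensional and carries no delay term, there is no characteristic quasi-polynomial to analyze, and the whole argument collapses to checking the signs of two scalar quantities.

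First I would write the right-hand sides as $f_1(I_2,R_l)=\beta_2 S_M I_2+\beta_2 R_l I_2-(\gamma_2+d_2)I_2$ and $f_2(I_2,R_l)=\gamma_1 I_1(0)-\beta_2 R_l I_2-dR_l$, and compute the four partial derivatives. The key simplifying observation is that the $(1,1)$ entry $\partial f_1/\partial I_2=\beta_2 S_M+\beta_2 R_l-(\gamma_2+d_2)$ vanishes at $U_c$: since the $I_2$-coordinate of $U_c$ is positive, the equilibrium equation $f_1=0$ forces $\beta_2 S_M+\beta_2 R_l^*-(\gamma_2+d_2)=0$. This is exactly what collapses the Jacobian to the form displayed in the statement, with a zero in the top-left corner.

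With that entry eliminated, the trace is simply the $(2,2)$ entry $\partial f_2/\partial R_l=-\beta_2 I_2^*-d$, which is manifestly negative; substituting the equilibrium value of $I_2^*$ rewrites it as the single negative fraction in the statement, using $\tilde{R}_1/\tilde{R}_2>1$ to ensure the denominator is positive. The determinant reduces to $-(\partial f_1/\partial R_l)(\partial f_2/\partial I_2)=\beta_2^2 I_2^* R_l^*$, a product of two positive quantities; expressing it in closed form yields $\tfrac{\beta_2}{\beta_1}\bigl(\beta_1\gamma_1 I_1(0)-d(d_1+\gamma_1)(\tfrac{\tilde{R}_1}{\tilde{R}_2}-1)\bigr)$, which is positive exactly by the existence condition \eqref{eqn: U_c existence}.

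The only real work is bookkeeping: carrying the equilibrium relations through carefully enough to see that the determinant's positivity is literally the hypothesis that $U_c$ is a positive steady state, and that the trace's negativity is automatic once $\tilde{R}_1/\tilde{R}_2>1$ is assumed (which we impose throughout this section to guarantee boundedness). I do not expect any genuine obstacle here--no delay, no higher-dimensional Routh--Hurwitz, no Lyapunov construction is needed--so the proof should be a short, direct sign check rather than anything requiring a new idea.
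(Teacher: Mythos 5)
Your proposal is correct and follows essentially the same route as the paper: linearize the planar system \eqref{eqn: trans_sys} at $U_c$, use the equilibrium relation $\beta_2 S_M+\beta_2 R_l^*=\gamma_2+d_2$ to kill the $(1,1)$ entry, and conclude by the trace--determinant (Routh--Hurwitz) criterion, with the determinant's positivity being exactly the existence condition \eqref{eqn: U_c existence}. Your closed-form expressions for the trace and determinant match the paper's verbatim, so there is nothing further to add.
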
We have the following theorem on the stability of $U_1$.
\begin{proposition}
\label{th: U1 LAS}
$U_1$ always exists. Furthermore,
\begin{enumerate}
    \item  $U_1$ is locally asymptotically stable when $\beta_1\gamma_1I_1(0) -d\left(d_1+\gamma_1\right)\left(\frac{\tilde{R}_1}{\tilde{R}_2}-1\right)<0$. In addition, $U_c$ does not exist.
    \item $U_1$ is unstable when $\beta_1\gamma_1I_1(0) -d\left(d_1+\gamma_1\right)\left(\frac{\tilde{R}_1}{\tilde{R}_2}-1\right)>0$.
\end{enumerate}

\begin{proof}
The Jacobian matrix at $U_1$ is 

\begin{equation*}
\begin{pmatrix}
\frac{\beta_2\left(d_1+\gamma_1\right)}{\beta_1} - \left(d_2+\gamma_2\right) + \frac{\beta_2 \gamma_1 I_1(0)}{d} & 0\\
-\frac{\beta_2 \gamma_1 I_1(0)}{d} & -d
\end{pmatrix}.
\end{equation*}
We find that the eigenvalues are

\begin{equation}
    \lambda_1 = \frac{\beta_2}{\beta_1 d}\left(\beta_1\gamma_1I_1(0) -d\left(d_1+\gamma_1\right)\left(\frac{\tilde{R}_1}{\tilde{R}_2}-1\right)\right)
\end{equation}
\begin{equation*}
\lambda_2 = -d
\end{equation*}
We see that $\lambda_2<0$ exactly when $\beta_1\gamma_1I_1(0) -d\left(d_1+\gamma_1\right)\left(\frac{\tilde{R}_1}{\tilde{R}_2}-1\right)<0$ and unstable when $$\beta_1\gamma_1I_1(0) -d\left(d_1+\gamma_1\right)\left(\frac{\tilde{R}_1}{\tilde{R}_2}-1\right)>0.$$
\end{proof}
\end{proposition}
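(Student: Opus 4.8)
The plan is to treat this as a routine planar equilibrium analysis: verify existence by direct substitution, linearize, and read the eigenvalues off a triangular Jacobian, with the only substantive work being an algebraic identity that ties the sign of the leading eigenvalue to the existence threshold for $U_c$.

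First, for existence I would substitute $I_2 = 0$ into system \eqref{eqn: trans_sys}. The first equation then vanishes identically, while the second reduces to $\gamma_1 I_1(0) - d R_l = 0$, forcing $R_l = \gamma_1 I_1(0)/d$. Since $I_1(0) > 0$ and all parameters are positive, this coordinate is positive, so $U_1$ is a genuine biologically relevant steady state for every choice of parameters; hence it always exists.

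Next I would linearize. Writing the right-hand sides as $F_1 = \bigl(\beta_2 S_M + \beta_2 R_l - (\gamma_2+d_2)\bigr)I_2$ and $F_2 = \gamma_1 I_1(0) - \beta_2 R_l I_2 - d R_l$, the off-diagonal entry $\partial F_1/\partial R_l = \beta_2 I_2$ vanishes at $U_1$ because $I_2 = 0$ there. The Jacobian is therefore lower triangular, with diagonal entries $\partial F_1/\partial I_2 = \beta_2 S_M + \beta_2 R_l - (\gamma_2+d_2)$ and $\partial F_2/\partial R_l = -\beta_2 I_2 - d$, evaluated at $U_1$. Substituting $S_M = (\gamma_1+d_1)/\beta_1$ and $R_l = \gamma_1 I_1(0)/d$ reproduces the matrix in the statement. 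Triangularity immediately yields the eigenvalues as the diagonal entries: $\lambda_2 = -d < 0$ unconditionally, and $\lambda_1 = \beta_2(\gamma_1+d_1)/\beta_1 - (\gamma_2+d_2) + \beta_2 \gamma_1 I_1(0)/d$.

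The key step, and the place that requires the most care, is to recast $\lambda_1$ so that its sign becomes transparent. I would expand $\tilde{R}_1/\tilde{R}_2 = \beta_1(d_2+\gamma_2)/\bigl(\beta_2(d_1+\gamma_1)\bigr)$, substitute into $d(d_1+\gamma_1)\bigl(\tilde{R}_1/\tilde{R}_2 - 1\bigr)$, and clear denominators to verify the identity
$$\lambda_1 = \frac{\beta_2}{\beta_1 d}\Bigl(\beta_1\gamma_1 I_1(0) - d(d_1+\gamma_1)\bigl(\tfrac{\tilde{R}_1}{\tilde{R}_2}-1\bigr)\Bigr).$$
Because the prefactor $\beta_2/(\beta_1 d)$ is positive, $\lambda_1$ carries the same sign as the bracketed quantity, which is precisely the existence threshold \eqref{eqn: U_c existence} for $U_c$. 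Consequently both eigenvalues are negative, so $U_1$ is locally asymptotically stable, exactly when $\beta_1\gamma_1 I_1(0) - d(d_1+\gamma_1)\bigl(\tilde{R}_1/\tilde{R}_2 - 1\bigr) < 0$, in which case $U_c$ fails to exist; and $\lambda_1 > 0$, giving instability, when this quantity is positive, which is exactly the regime in which $U_c$ is present. This settles both parts and, as a bonus, exhibits the expected transcritical exchange of stability between $U_1$ and $U_c$ at the threshold.
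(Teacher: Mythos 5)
Your proposal is correct and follows essentially the same route as the paper: compute the Jacobian at $U_1$, observe it is lower triangular since the $(1,2)$-entry $\beta_2 I_2$ vanishes at $I_2=0$, read off the eigenvalues $\lambda_1$ and $-d$, and rewrite $\lambda_1$ via the identity $\tilde{R}_1/\tilde{R}_2=\beta_1(d_2+\gamma_2)/\bigl(\beta_2(d_1+\gamma_1)\bigr)$ so its sign matches the existence threshold for $U_c$. Your write-up is in fact slightly more careful than the paper's, which omits the triangularity remark and contains a typo (writing $\lambda_2$ where $\lambda_1$ is meant in the final sign discussion).
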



\begin{theorem}
\label{thm: global stability Uc}
If $\beta_1\gamma_1I_1(0) -d\left(d_1+\gamma_1\right)\left(\frac{\tilde{R}_1}{\tilde{R}_2}-1\right)>0$, then all solutions tend to $U_c$.
\begin{proof}
To simplify our calculation we let $x=I_2$ and $y=R_l$. Furthermore, let $a=\gamma_1 I_1(0)$,  $b=\gamma_2+d_2-\beta_2S_M$ and $\beta=\beta_2$. Then system \ref{eqn: trans_sys} becomes
\begin{equation}
\begin{aligned}
\frac{dx}{dt} & =  \beta xy -bx &\\
\frac{dy}{dt} & = a - \beta xy -dy.
\end{aligned}
\label{eqn: trans_sys x_y}
\end{equation}
With equilibrium solution $(x^*,y^*)=U_c$. The system is the exact same as system \eqref{eqn: limiting system} and hence a Lyapunov function is
\begin{equation*}
    V(x,y)=x-x^*\ln(x)+y-y^*\ln(y).
\end{equation*}
\end{proof}

\end{theorem}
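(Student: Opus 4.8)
The plan is to confirm that the function $V$ named at the end of the statement is a genuine strict Lyapunov function on the open positive quadrant, and then to invoke the LaSalle invariance principle, following the same template used in the proof of Theorem~\ref{thm: GS E2} for the limiting system. First I would record the two equilibrium identities satisfied by the positive steady state $U_c=(x^*,y^*)$. From $\dot x=0$ together with $x^*>0$ one gets $\beta y^*=b$, and from $\dot y=0$ one gets $a=\beta x^*y^*+dy^*=bx^*+dy^*$. The hypothesis $\beta_1\gamma_1 I_1(0)-d(d_1+\gamma_1)(\tilde R_1/\tilde R_2-1)>0$ is precisely the positivity condition \eqref{eqn: U_c existence}, so $x^*,y^*>0$ are well defined; and since this section stands under the assumption $\tilde R_1/\tilde R_2>1$, Proposition~\ref{prop: trans bounded} guarantees bounded orbits while Proposition~\ref{th: positivity 2D} keeps them in the open positive quadrant.

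Next I would differentiate $V$ along solutions, writing $\dot V=(1-x^*/x)\dot x+(1-y^*/y)\dot y$. Substituting $\dot x=\beta x(y-y^*)$ (which uses $b=\beta y^*$) immediately gives $(1-x^*/x)\dot x=\beta(x-x^*)(y-y^*)$. For the $y$-term I would replace $a$ by $bx^*+dy^*$ and $b$ by $\beta y^*$, expand $(1-y^*/y)(a-\beta xy-dy)$, and collect terms. After the cancellations one is left with
\begin{equation*}
\dot V=-(\beta x^*+d)\,\frac{(y-y^*)^2}{y}\le 0,
\end{equation*}
with equality exactly on the line $\{y=y^*\}$; since $x^*>0$ and $y>0$ this is nonpositive throughout the positive quadrant. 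I would also note that $V$ attains a strict global minimum at $(x^*,y^*)$ and tends to $+\infty$ as $x$ or $y$ approaches $0$ or $+\infty$, so its sublevel sets are compact subsets of the open positive quadrant; combined with boundedness of orbits, this legitimizes the application of LaSalle.

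Finally I would determine the largest invariant set inside $\{\dot V=0\}=\{(x,y):y=y^*,\ x>0\}$. On this line, invariance forces $\dot y=0$, i.e. $a-\beta xy^*-dy^*=0$; using $\beta y^*=b$ this reads $bx=a-dy^*=bx^*$, hence $x=x^*$. Thus the largest invariant subset is the singleton $\{U_c\}$, and the Lyapunov--LaSalle theorem yields convergence of every positive solution to $U_c$.

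The routine-but-delicate part is the cancellation in the $\dot V$ computation: the expansion of $(1-y^*/y)(a-\beta xy-dy)$ produces several terms, and only after invoking both equilibrium identities do the $x$-linear and $xy$ contributions merge with the $x$-part to collapse into the single negative-definite expression $-\beta x^*(y-y^*)^2/y$. I expect this bookkeeping, rather than any conceptual difficulty, to be the main obstacle; the remaining steps follow the structure already established for Theorem~\ref{thm: GS E2}.
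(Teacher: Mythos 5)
Your proposal is correct and follows essentially the same route as the paper: the paper's (very terse) proof simply observes that system \eqref{eqn: trans_sys x_y} has the same structure as the limiting system \eqref{eqn: limiting system} and names the Volterra-type Lyapunov function, and your computation of $\dot V=-(\beta x^*+d)(y-y^*)^2/y$ together with the LaSalle step is exactly the argument already carried out for Theorem~\ref{thm: GS E2}, transplanted with $y\leftrightarrow S$ and $x\leftrightarrow I_2$. You supply the details the paper leaves implicit (equilibrium identities, compact sublevel sets, identification of the largest invariant set), but there is no difference in method.
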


A phase portrait of the solution trajectory to the coexistence steady state is shown in \autoref{fig:phase_portrait}. Furthermore, it can be shown that $U_1$ is globally asymptotically stable under certain conditions.

\begin{theorem}\label{thm:U1gas}
If $\beta \left(\frac{a}{d}\right)<b$, then all solutions tend to $U_1$.
\begin{proof}
We prove this result by contradiction. Recall that if $\beta\left(\frac{a}{d}\right)<b$, then the transient system \eqref{eqn: trans_sys x_y} does not attain a positive steady state.  Assume that $\beta\left(\frac{a}{d}\right)<b$ and $\lim_{t\to\infty} (x,y) \neq (0,a/d)$. Furthermore, observe that $\limsup_{t\to\infty}y(t) \leq a/d$. Thus for $\varepsilon>0$, there exists a $t_*>0$ such that $y(t)<a/d+\varepsilon$ for $t>t_*$. With this claim, we see that, for $t>t_*$,
    $$\frac{dx}{dt} = \beta yx - bx < \beta\left(\frac{a}{d}+\varepsilon\right)x - bx<0,$$
implying that $\lim_{t\to\infty}x(t) = x_* \geq0$. If $x_*>0$, then an application of Barbalat's lemma \citep{barbalat} yields
    $$\lim_{t\to\infty}\frac{dx}{dt} = 0 = \beta x_* \left( \lim_{t\to\infty}y(t) \right) - bx_*,$$
which shows that $\lim_{t\to\infty}y(t) = \frac{b}{\beta} >0$. Hence we obtain the positive steady state $E_* = (x_*,\frac{b}{\beta})$, which contradicts the fact that the model \eqref{eqn: trans_sys x_y} has no positive steady state. In other words, the claim yields $\lim_{t\to\infty}x(t) = 0$.

Furthermore, since $y$ is bounded, the above result implies that for any $\varepsilon_1<a$, there exists a  $t_1>t_*$ such that $\beta xy<\varepsilon_1$ for $t>t_1$. Therefore
    $$\frac{dy}{dt}\geq a - \varepsilon_1 - dy$$
for $t>t_1$, yielding
    $$\liminf_{t\to\infty}y(t) \geq \frac{a-\varepsilon_1}{d}.$$
Letting $\varepsilon_1\to0$, we see that $\liminf_{t\to\infty}y(t) \geq a/d$. As well, our claim indicates that $\limsup_{t\to\infty}y(t) \leq a/d$. Hence $\lim_{t\to\infty} y(t) = a/d$.

In the following, we prove our claim. The proof is divided into three cases:
    \begin{enumerate}
        \item $y(0)\leq a/d$;
        \item $y(0) > a/d$ and there exists a $t_2>0$ such that $y(t) > a/d$ for $t\in[0,t_2)$ and $y(t_1) = a/d$;
        \item $y(t)>a/d$ for all $t>0$.
    \end{enumerate}
We consider case 1. We have
    $$\frac{dy}{dt} < a-dy = d\left(\frac{a}{d} - y \right) \implies y(t) < \frac{a}{d} + \left(y(0) - \frac{a}{d}\right)e^{-dt}.$$
Hence $y(t)<a/d$ for $t>0$, and our claim is true.

Consider the second case. From case 1, we see that $y(t) < a/d$ for $t>t_1$ and again our claim is true.

Finally, consider case 3. Here, $dy/dt<0$ and there is a $y_c\geq a/d$ such that 
    \begin{equation}\label{eq:star1}\tag{$*$}
        \lim_{t\to\infty} y(t) = y_c \geq \frac{a}{d}.
    \end{equation}
By Barbalat's lemma \citep{barbalat}, we have
    $$0 = \lim_{t\to\infty} (a - \beta xy - dy) = a - \beta y_c \left(\lim_{t\to\infty} x(t) \right) - dy_c$$
which implies
    $$y_c = \frac{a}{d + \beta\lim_{t\to\infty}x(t)} \leq \frac{a}{d}.$$
This together with \eqref{eq:star1} imply
    $$\lim_{t\to\infty} y(t) = \frac{a}{d} \quad \text{and} \quad \lim_{t\to\infty}x(t) = 0,$$
contradicting the assumption that $\lim_{t\to\infty} (x,y) \neq (0,a/d)$. This concludes the proof.
\end{proof}
\end{theorem}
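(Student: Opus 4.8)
The plan is to show directly that the $I_2$-component (written $x$) of every solution converges to $0$ while the $R_l$-component (written $y$) converges to $a/d$, i.e. that every trajectory of \eqref{eqn: trans_sys x_y} approaches $U_1=(0,a/d)$. The driving force is the hypothesis $\beta(a/d)<b$: as recorded just before this theorem, it guarantees that no positive steady state exists (at any positive equilibrium one would need $y=b/\beta$ and then $x=a/b-d/\beta$, which is positive exactly when $\beta(a/d)>b$), and it simultaneously supplies a spectral gap that forces $x$ to decay. Rather than argue by contradiction, I would establish the two limits head-on.

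First I would bound $y$ from above. Since $x,y\ge 0$ by Proposition \ref{th: positivity 2D}, the second equation gives $\dot y=a-\beta xy-dy\le a-dy$, so an integrating-factor comparison yields $y(t)\le a/d+\bigl(y(0)-a/d\bigr)e^{-dt}$ and hence $\limsup_{t\to\infty}y(t)\le a/d$; in particular $y$ is bounded. Second, I would use this bound to kill $x$. Choose $\varepsilon>0$ small enough that $\beta(a/d+\varepsilon)<b$, which is possible because the inequality $\beta(a/d)<b$ is strict. By the $\limsup$ bound there is a time $T$ beyond which $y(t)<a/d+\varepsilon$, so for $t>T$ the first equation gives $\dot x=(\beta y-b)x<\bigl(\beta(a/d+\varepsilon)-b\bigr)x=:-\delta x$ with $\delta>0$, whence $x(t)\le x(T)e^{-\delta(t-T)}\to 0$.

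Finally I would pin down the limit of $y$. Because $x\to 0$ and $y$ is bounded, the cross term satisfies $\beta xy\to 0$, so for any $\varepsilon_1>0$ there is a time past which $\dot y\ge a-\varepsilon_1-dy$, giving $\liminf_{t\to\infty}y(t)\ge (a-\varepsilon_1)/d$; letting $\varepsilon_1\to 0$ yields $\liminf_{t\to\infty}y(t)\ge a/d$. Combined with the earlier $\limsup_{t\to\infty}y(t)\le a/d$, this squeezes $y$ to $a/d$, so every solution tends to $U_1$.

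The one delicate point is the last step: the $y$-equation is effectively non-autonomous through the vanishing term $\beta xy$, so instead of invoking an asymptotically-autonomous limit theorem or Barbalat's lemma, I would rely on the sandwich $a-\varepsilon_1-dy\le\dot y\le a-dy$, valid for all large $t$, which delivers $\lim y=a/d$ with no further machinery. I expect the main obstacle to be purely organizational — arranging the $\limsup$/$\liminf$ estimates in the right order and verifying that the chosen $\varepsilon$ makes the decay rate $\delta$ genuinely positive — rather than any essential analytic difficulty.
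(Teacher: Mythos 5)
Your proposal is correct, and while it follows the same three-step skeleton as the paper's proof (bound $y$ above by $a/d$, force $x\to 0$, then squeeze $y$ to $a/d$), it executes the two nontrivial steps by a genuinely more elementary route. The paper obtains $\limsup_{t\to\infty}y(t)\le a/d$ through a three-case analysis on the initial position of $y$ relative to $a/d$, whereas you get it in one line from the differential inequality $\dot y\le a-dy$ and an integrating-factor comparison; and the paper shows $x\to 0$ by first arguing $\dot x<0$ so that $x$ decreases to some $x_*\ge 0$, then invoking Barbalat's lemma and the non-existence of a positive steady state to rule out $x_*>0$ inside a proof by contradiction, whereas you simply choose $\varepsilon$ small enough that $\delta:=b-\beta(a/d+\varepsilon)>0$ and read off exponential decay $x(t)\le x(T)e^{-\delta(t-T)}$. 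Your version buys three things: it is direct rather than by contradiction, it dispenses with Barbalat's lemma and with the steady-state non-existence argument entirely, and it quietly repairs a small imprecision in the paper, which asserts $\beta(a/d+\varepsilon)x-bx<0$ for an arbitrary $\varepsilon>0$ when this in fact requires $\varepsilon$ to be chosen small using the strictness of $\beta(a/d)<b$ --- a point you make explicit. The only thing the paper's route offers in exchange is that the Barbalat argument generalizes to situations where one cannot extract a uniform spectral gap; here that generality is not needed.
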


\begin{figure}[!htb]
    \centering
\includegraphics[width=4in]{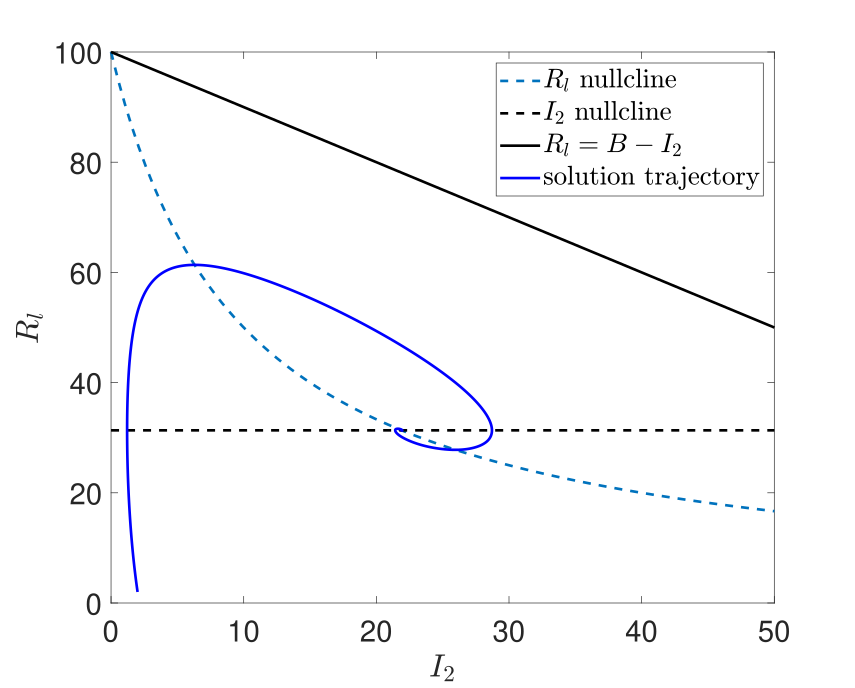}
    \caption{Phase plane with solution trajectory and coexistence steady state $U_c$ of system \eqref{eqn: trans_sys}. Parameter values: $\beta_1=0.03$, $\beta_2=0.01$ $\gamma_1=0.1$, $\gamma_2=0.28$, $d=d_1=d_2=0.1$ and $a=0.7$. Here $B$ is defined as in Proposition \ref{prop: trans bounded}.}
    \label{fig:phase_portrait}
\end{figure}

From Propositions \ref{th: Uc LAS} and \ref{thm: global stability Uc} we see that both virus strains can coexist as long as the original strain has a higher reproduction number than strain 2 and $$\beta_1\gamma_1I_1(0)-d\left(d_1+\gamma_1\right)\left(\frac{\tilde{R}_1}{\tilde{R}_2}-1\right)>0.$$ However, we may solve for $\tilde{R}_2$ in terms of $\tilde{R}_1$ to generate a bifurcation curve between coexistence and competitive exclusion,
\begin{equation}
    \tilde{R}_2=\frac{\tilde{R}_1}{\frac{\gamma_1I_1(0)}{d}\tilde{R}_1+1}.
    \label{eqn: bif curve trans}
\end{equation}
\autoref{fig:R02R02 plane} shows the $\tilde{R}_2\tilde{R}_1-$bifurcation plane where equation \eqref{eqn: bif curve trans} is parameterized by $\beta_1$ $\gamma_1$ or $d_1$. For the two strains to coexist together, strain 1 needs to have a higher basic reproduction number than strain 2. However, it can't be too high relative to strain 2 or it will force strain 2 to extinction. The unbounded region corresponds to Proposition \ref{thm: I2 unbounded}. In general, the model suggests that viruses which mutate into strains that are slightly less infectious are more likely to coexist together. On the other hand, viruses that mutate into strains that are sufficiently less infectious relative to the original strain, will out-compete the mutated strain.
\begin{figure}[!htbp]
    \includegraphics[trim = 60mm 40mm 60mm 40mm, clip,width=\textwidth]{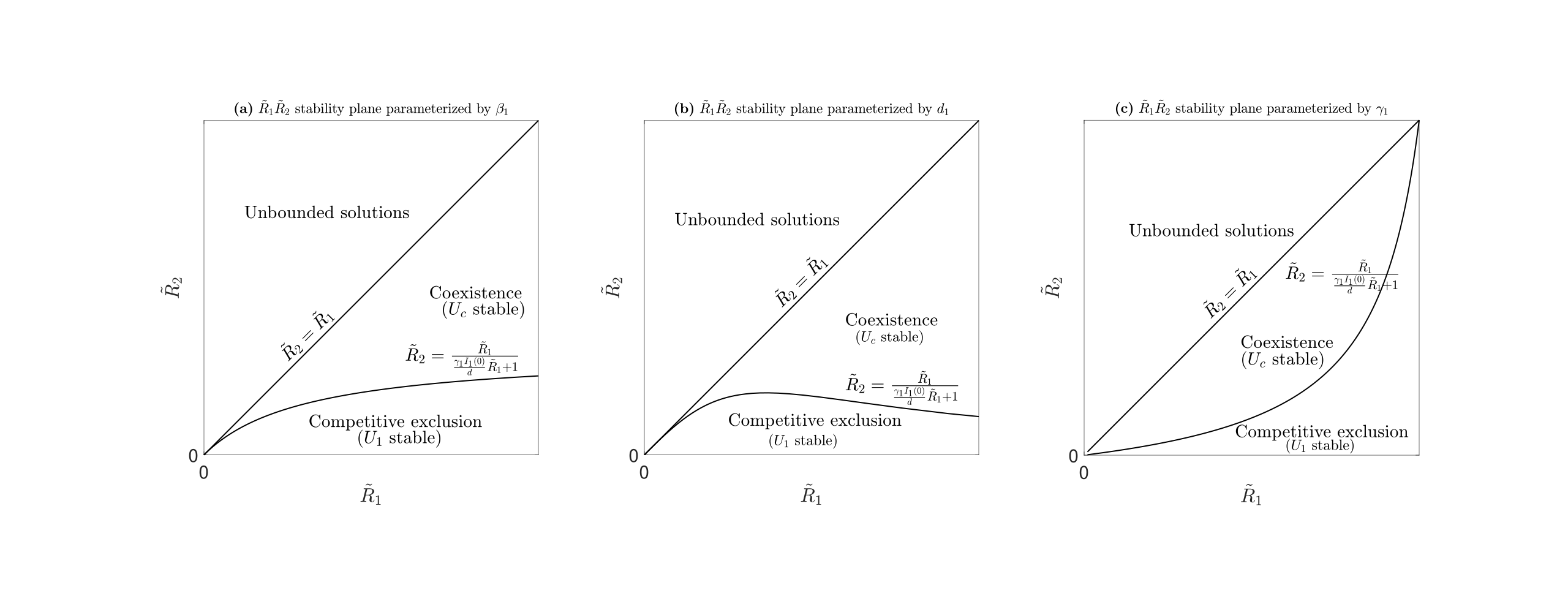}
    \caption{The $\tilde{R}_2\tilde{R}_1$-plane for the transient model (system \ref{eqn: trans_sys}). The model exhibits 3 different dynamics: 1) unbounded solutions, 2) coexistence of the two virus strains and 3) competitive exclusion of the 2nd strain. \textbf{(a)} Stability plane parameterized by $\beta_1$; \textbf{(b)} Stability plane parameterized by $d_1$; \textbf{(c)} Stability plane parameterized by $\gamma_1$; If starting in the competitive exclusive region where $I_1$ is the long-term winner and then increasing $\beta_2$ (thus increasing $\tilde{R}_2$) we see that a bifurcation occurs when $\tilde{R}_2=\frac{\tilde{R}_1}{\frac{\gamma_1I_1(0)}{d}\tilde{R}_1}+1$ and the second strain can coexist with the first strain. Increasing $\beta_2$ even more will ultimately lead to another bifurcation where $I_2$ becomes unbounded (by Proposition \ref{thm: I2 unbounded}). }
    \label{fig:R02R02 plane}
\end{figure}

\begin{figure}[!htbp]
\includegraphics[trim = 20mm 0mm 20mm 0mm, clip,width=\textwidth]{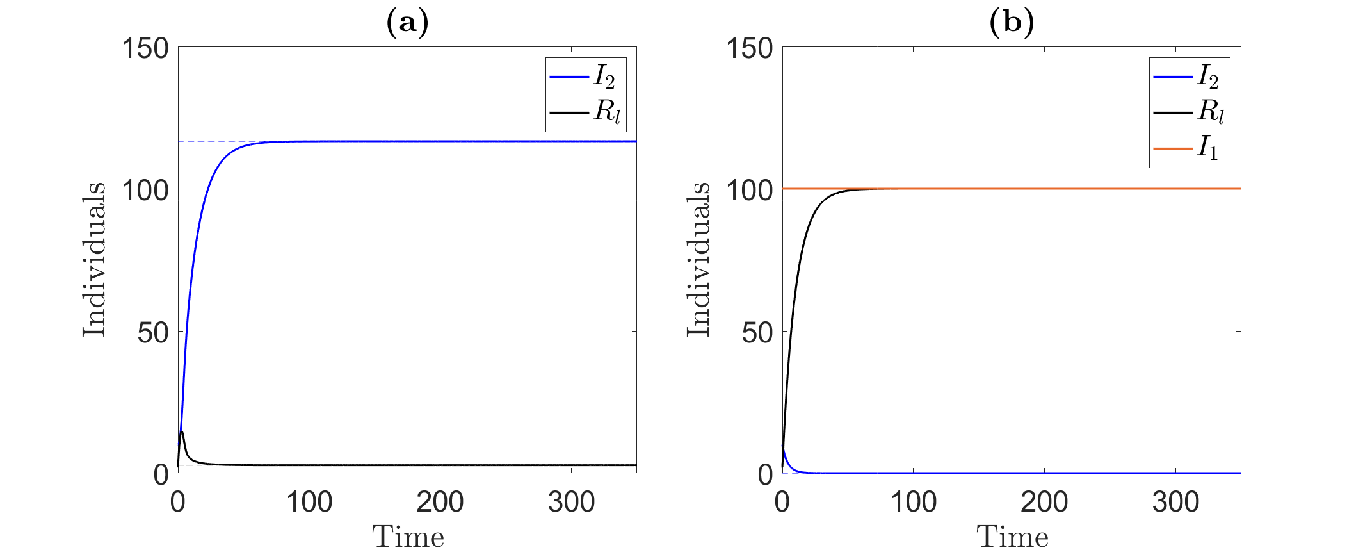}
    \caption{Long-term dynamics of system \eqref{eqn: trans_sys}. \textbf{(a)} coexistence steady state where $I_1$ and $I_2$ coexist. \textbf{(b)} competitive exclusion of $I_2$ by $I_1$. We do not plot $S$ or $I_1$ since they are held constant at $S_M$ and $I_1(0)$, respectively.}
    \label{fig:trans asymp}
\end{figure}

\begin{table}[!htbp]
\begin{center}
\caption{Established results and open questions}\label{tab: results}%
\begin{tabular}{@{}lll@{}}

\toprule
& Conditions  & Results or question \\
\midrule
1. &$R_i>1$    &  Existence of strain-specific equilibrium $E_i$, $i=1,2$  \\

2. &$\frac{a\beta_1}{d\left(\gamma_1R_2+d_1\right)}R_2<R_1<\frac{a\beta_1}{dd_1}$    &  Existence of coexistence equilibrium  \\
3. &$\max\left\{R_1,R_2\right\}<1$    &  Disease free equilibrium, $E_0$ is globally stable  \\
4. &See theorem \ref{GS E1} \ &  $E_1$ is globally stable \\
5. &$R_1<1$ and $R_2>1$ & $E_2$ is globally stable  \\
6. & See \cite{fudolig_2020} & Local stability of $E_c$\\
7. & Open & Global stability of $E_c$\\
8. & Open & Global stability of $E_1$ with $R_2>1$.\\
9. & Open & Global stability of $E_2$ with $R_1>1$.\\
10. & Open & Influence of $\tau$ on the stability of $E_c$.\\
\midrule
1.& $\frac{\gamma_1I_1(0)\beta_2}{d}<\gamma_2+d_2-\beta_2S_M$ & $U_1$ is globally stable\\
2.& Inequality \eqref{eqn: U_c existence} & $U_c$ is globally stable\\
\botrule
\end{tabular}
\end{center}
\end{table}

\section{Numerical results}\label{sec:num}
\subsection{Data fitting}
 The system \eqref{eqn: model1} is validated by fitting to wastewater data from October 1, 2020 to May 13, 2021 obtained from the Deer Island Treatment Plant in Massachusetts \citep{xiao_2022}. This plant serves approximately 2.3 million people in the greater Boston area \citep{xiao_2022}. More information on the collection and processing of wastewater samples can be found in \cite{xiao_2022}. Fitting to wastewater data, as opposed to incidence or mortality data, allows us to avoid underreporting issues related to clinical reporting.

The B.1.1.7 (Alpha) variant was detected in Massachusetts in January 2021 \citep{alpha_MA}, while the B.1.617 (Delta) variant was found in the state in April 2021 \citep{delta_MA}. It should be noted that Massachusetts (population size 7 million) began vaccinating healthcare workers on December 15, 2020 during Phase 1 of the state's vaccination plan \citep{vax_phase}. For simplification purposes, we assume that individuals in the susceptible ($S$) and recovered ($R_l$) compartments are vaccinated at a rate $v$ and that the vaccine offers immediate protection from both strains.  Individuals who have recovered from the emerging strain are not tracked or vaccinated for several reasons. In the presented model, these individuals are removed from the population and thus do not impact infection dynamics. It has been shown that two vaccine doses provided significant protection against the Alpha and Delta variants with respect to infection and hospitalization \citep{gram_2022}. Although protection against infection has been found to wane over time, \cite{gram_2022} found that, after 120 days, vaccine efficacy against Delta decreased from 92.2\% to 64.8\% in those aged 12 to 59 years. Vaccine efficacy in individuals over 60 years of age saw decreases in efficacy from 90.7\% to 73.2\% and 82.3\% to 50\% for Alpha and Delta, respectively \citep{gram_2022}. Due to the limited time-scale of vaccination in the model and the scope of this study, we assume protection does not wane.

Based on data on fully-vaccinated individuals (defined as those who received all doses of the vaccine protocol) from the U.S. Centers for Disease Control and Prevention (U.S. CDC), compiled by Our World in Data \citep{vax_data,vax_cdc}, we fix the per capita vaccination rate at $v = 0.0038$ per day with vaccination beginning on January 5, 2021 due to the three week time period between first and second doses \citep{vax_phase}. The calculation of $v$ is shown in \autoref{fig:vax_fit}.

\begin{figure}[!htbp]
    \centering
    \includegraphics[width=\textwidth]{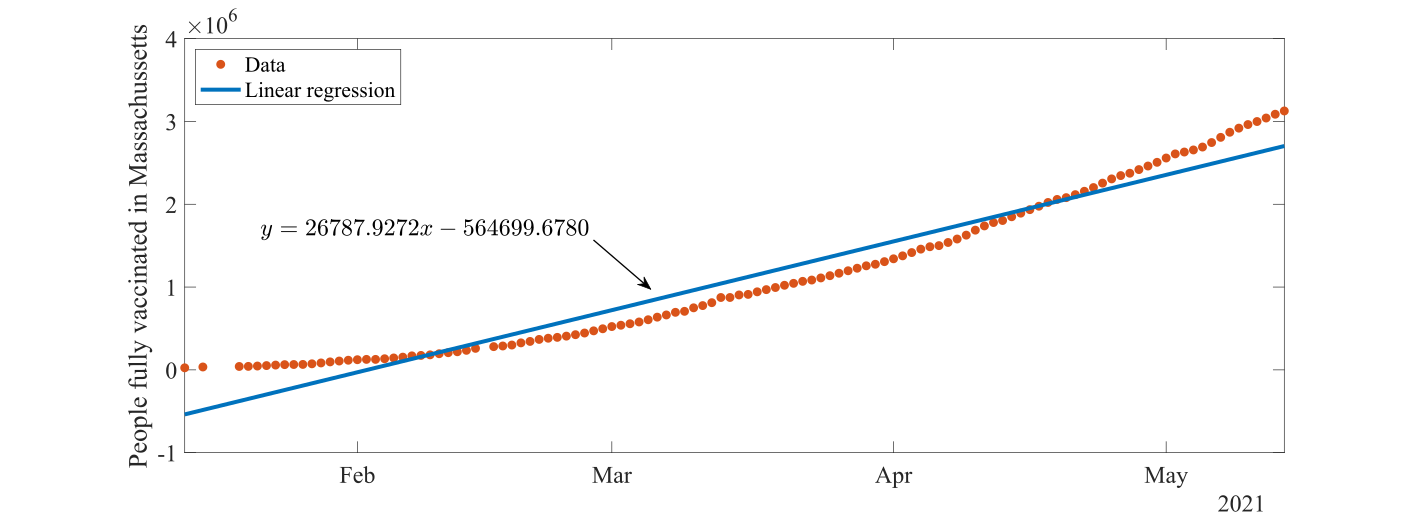}
    \caption{Line of best-fit (blue) compared to Massachusetts vaccination data. The per capita vaccination rate $v = 0.0038$ is given by the slope of the best-fit line divided by the total state population of 7 million.}
    \label{fig:vax_fit}
\end{figure}

In order to fit system \eqref{eqn: model1} to the wastewater data, we add a compartment $C_V(t)$ denoting the cumulative viral RNA copies in the wastewater following the formulations in \cite{saththasivam_2021}. Hence, the dynamics of the cumulative virus released into the wastewater is governed by
    $$\frac{dC_V}{dt} = \alpha \delta (1-\eta) \left(I_1 + I_2\right),$$
where $\alpha$ denotes the fecal load per individual in grams per day, $\delta$ denotes the viral shedding rate per gram of stool, and $(1-\eta)$ denotes the proportion of RNA that arrives to the wastewater treatment plant. Because the wastewater data is daily and $C_V(t)$ is cumulative, the objective function be minimized is given by
    $$\text{SSE} = \sum_{t_n} \left(\log_{10}(\tilde{C}_V(t_n)) - \log_{10}{\text{data}(t_n)} \right)^2$$
where $\tilde{C}_V(t_n) = C_V(t_n) - C_V(t_{n-1})$ (i.e., new viral RNA entering the sewershed on day $t_n$). Parameter estimation is carried out using \texttt{fmincon} and 1000 \texttt{MultiStart} runs in Matlab. For comparison purposes, both the ODE and DDE versions of the model were fit to the data. Initial values for $I_1$ and $I_2$ are estimated by using the initial viral RNA data and the estimated values of $\alpha$, $\delta$, and $\eta$; that is, the constraint
    $$I_1(0) + I_2(0) = \frac{\text{initial viral data}}{\alpha\times\delta\times(1-\eta)},$$
and assuming that $I_1(0)\geq I_2(0)$. For the model with time delay, the same constraints are used for the initial histories. Values for estimated and fixed parameters are listed in \autoref{tab:param}.

\autoref{fig:ODE_fit} depicts model simulations without time delay using the best-fit parameters when compared to daily wastewater data (\autoref{fig:ODE_fit}a) and seven-day average case data (\autoref{fig:ODE_fit}b). The ODE version of the model predicts peak new infections on December 29, 2020, preceding the daily reported case data by 11 days. Due to the unreliability in the case data, however, this 1.5 week difference may be reasonable. Furthermore, the model projects approximately six times more new cases than the reported case data at their respective peaks.

Best-fit simulations with time delay are shown in \autoref{fig:DDE_fit}. Here, the model predicts daily incidence peaking on January 4, 2021, approximately five times higher than the reported cases on January 9, 2021, a difference of 5 days. Unlike the ODE version, the inclusion of time delay allows the model to capture the decline of the Alpha wave, but both the ODE and DDE versions of the model are unable to capture the Delta wave.

\begin{figure}[!htbp]
    \centering
    \includegraphics[width=\textwidth]{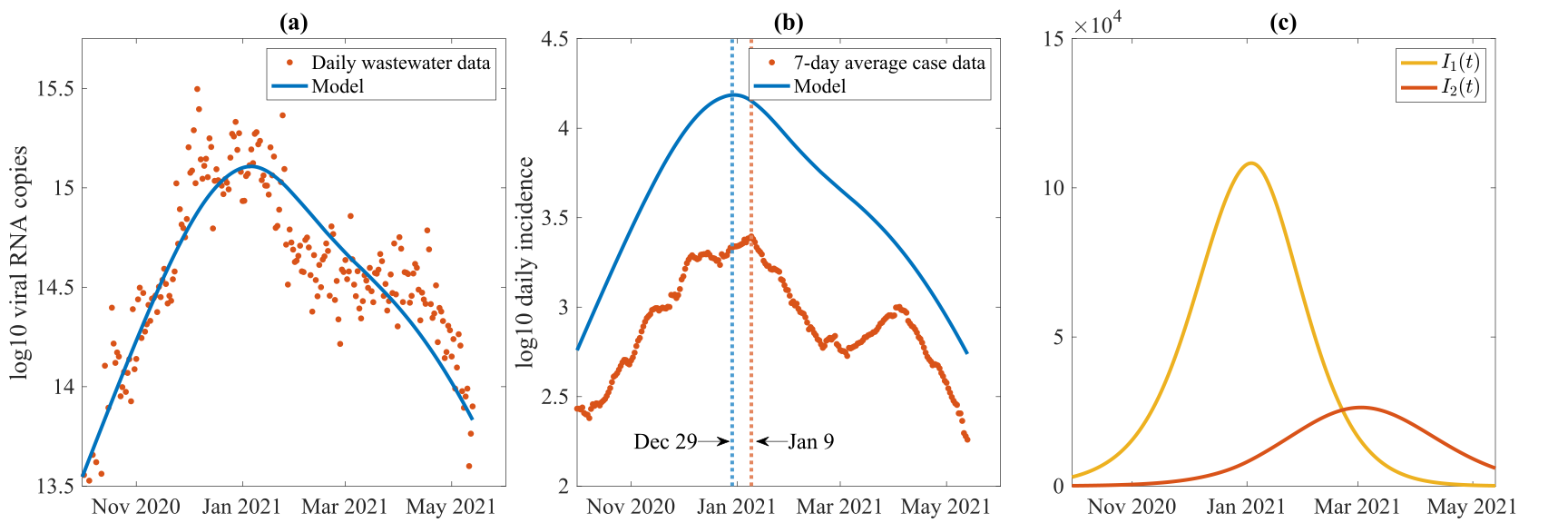}
    \caption{\textbf{(a)} Best-fit model without time delay compared to wastewater data (SSE = 8.6230). \textbf{(b)} Model output of total daily new cases of $I_1$ and $I_2$ compared to seven-day average of new reported cases. Dotted lines indicate date of maximum reported cases for the data (orange) and the model (blue). \textbf{(c)} Model output of strain 1 (solid line) and strain 2 (dashed) line over time}
    \label{fig:ODE_fit}
\end{figure}

\begin{figure}[!htbp]
    \centering
    \includegraphics[width=\textwidth]{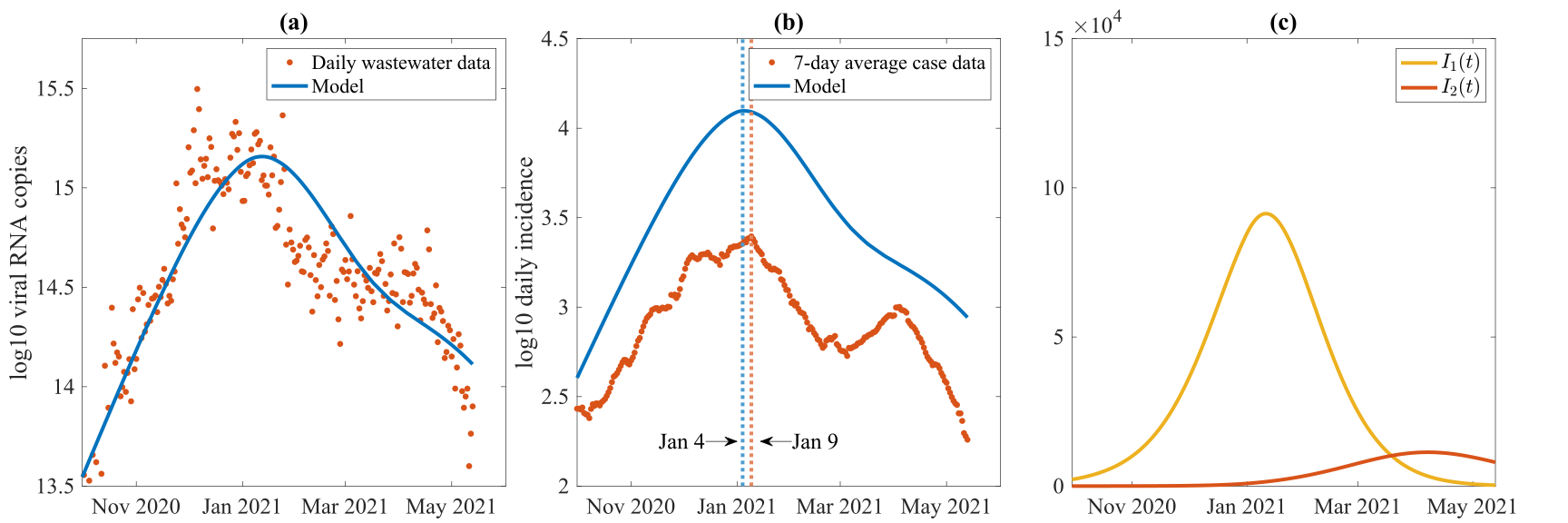}
    \caption{\textbf{(a)} Best-fit model with estimated time delay compared to wastewater data (SSE = 11.0819). \textbf{(b)}  Model output of total daily new cases of $I_1$ and $I_2$ compared to seven-day average of new reported cases. Dotted lines indicate date of maximum reported cases for the data (orange) and the model (blue). \textbf{(c)} Model output of strain 1 (solid line) and strain 2 (dashed) line over time}
    \label{fig:DDE_fit}
\end{figure}

\begin{sidewaystable}[!htbp]
\begin{center}
\begin{minipage}{550pt}
\caption{Parameters of the ODE and DDE versions of the model \eqref{eqn: model1}. All units are day$^{-1}$ unless otherwise noted}\label{tab:param}%
\begin{tabular}{@{}lllll@{}}
\toprule
Parameter & Description  & ODE & DDE & Reference\\
\midrule
$a$    & Birth rate (persons per day) & 62.1  & 62.1 &  M.A. Department of Public Health (2022)\footnote{\cite{MA_births}}\\
$d$    & Natural death rate   &  0.000035  & 0.000035 & \cite{MA_deaths}\\
$v$ & Vaccination rate & 0.0038 & 0.0038 & \cite{vax_data}, U.S. CDC (2022)\footnote[2]{\label{b}\cite{vax_cdc}}\\
$\alpha$ & Fecal load (grams per day per person) & 149 & 149 & \cite{saththasivam_2021}\\
$\delta$ & Viral shedding rate (copies/gram) & 1.3561$\times10^8$ & 2.0733$\times10^8$ & Fitted\\
$\eta$ & Losses in the sewer (unitless) & 0.4755 & 0.5005 & Fitted \\
\midrule
$\beta_1$   & Strain 1 contact rate (per person per day)& 7.602$\times10^{-8}$ & 7.7330$\times10^{-8}$ & Fitted \\
$d_1$ &  Strain 1 disease-induced mortality rate & 0.001 & 0.0044 & Fitted\\
$\gamma_1$ & Strain 1 recovery rate & 1/8 & 1/8 & \cite{killingley_2022}, U.S. CDC (2022)\footnotemark[2] \\
\midrule
$\beta_2$ & Strain 2 contact rate (per person per day)& 7.6516$\times10^{-8}$ & 8.2870$\times10^{-8}$ & Fitted\\
$d_2$ & Strain 2 disease-induced mortality rate & 0.0009 & 0.00001 & Fitted\\
$\gamma_2$ & Strain 2 recovery rate & 1/8 & 1/8 & \cite{killingley_2022}, U.S. CDC (2022)\footnotemark[2]\\
$\tau$ & Temporary cross-immunity (days) & & 2.0358 & Fitted\\
\midrule
$I_1(0)$ & Initial individuals infected with strain 1 & 3028.3530 & 2208.2008 & Fitted\\
$I_2(0)$ & Initial individuals infected with strain 2 & 110.1813 & 2.6861 & Fitted\\
SSE & & 8.6230 & 11.0819\\
\botrule
\end{tabular}
\end{minipage}
\end{center}
\end{sidewaystable}

\subsection{Sensitivity analysis}
In this section, we carry out a local sensitivity analysis to explore which parameters are the most important to model dynamics. We use a normalized sensitivity analysis so that the sensitivity coefficients are not affected by parameter magnitude. Here, the normalized sensitivity coefficients are given by \citep{saltelli_2000}:
    $$s_p = \frac{\partial Y}{\partial p}\times\frac{p}{Y} \approx \frac{[Y(p+\Delta p) - Y(p)]/Y(p)}{\Delta p/p}, $$
where $p$ and $Y$ denote the parameter and response of interest, respectively, and $\Delta p$ is the perturbation size. Each parameter is varied by 1\% individually from the values listed in \autoref{tab:param} while all other parameters are fixed. Here, the response variable $Y$ is cumulative cases evaluated at steady state. We ignore the parameters related to wastewater ($\alpha$, $\delta$, and $\eta$) since they do not impact disease dynamics in the analysis. Results are shown in \autoref{fig:sens}. The height of the bars indicates how sensitive the response variable is to the parameter; the direction of the bars (or sign of the sensitivity coefficient) indicates the direction of correlation.

The ODE and DDE versions of the model display significant sensitivity to the strain-specific contact rates ($\beta_1,\beta_2$) and the strain-specific recovery rates ($\gamma_1,\gamma_2$); the DDE version of the model has increased sensitivity to the initial number of those infected with strain 1 compared to the model without time delay. Furthermore, model dynamics, independent of time delay, are only slightly (if at all) impacted by changes in the strain-specific mortality rates ($d_1,d_2$).

\begin{figure}[!htbp]
    \centering
    \includegraphics[width=\textwidth]{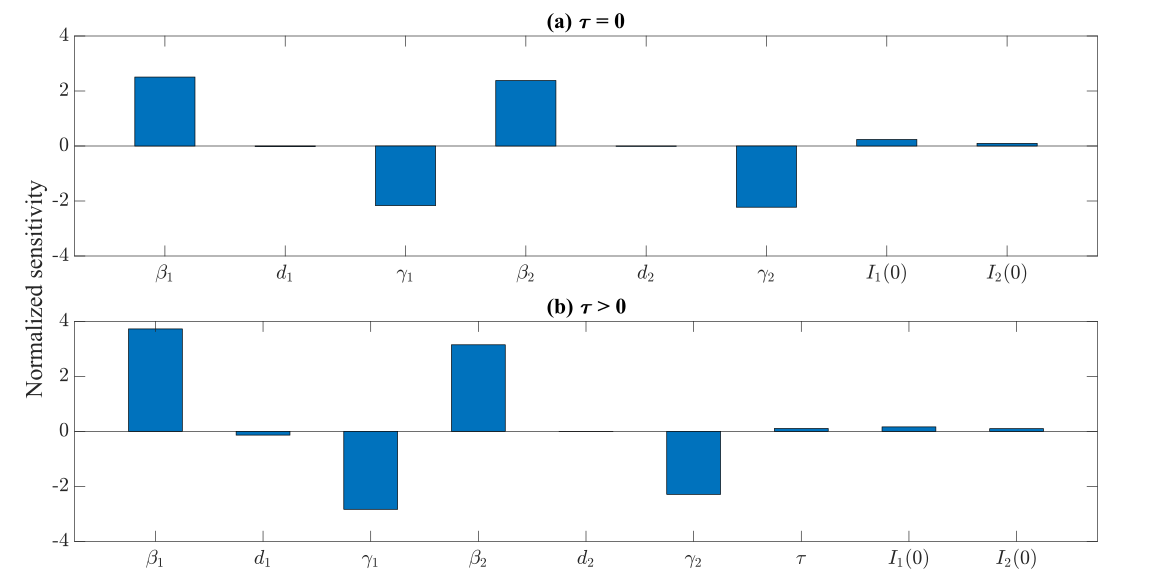}
    \caption{Local normalized sensitivity analysis with respect to cumulative steady state cases for the model \eqref{eqn: model1} \textbf{(a)} without time delay, and \textbf{(b)} with time delay. Parameters are varied by 1\% one at a time. Baseline values are listed in \autoref{tab:param}}
    \label{fig:sens}
\end{figure}

\section{Discussion}

In this paper, we have constructed a mathematical model describing two-strain virus dynamics with temporary cross-immunity. Although this general framework is applicable to many diseases, we put our model into the context of the COVID-19 pandemic and connected infectious individuals with wastewater data. The model produces rich long-term dynamics that include: (1) a state where the two strains are not infectious enough and are cleared from the population, (2) two competitive exclusion states where one of the strains is more infectious than the other and ultimately forces the other to extinction, and (3) a coexistence state where the two strains coexist together. By using a quasi-steady state argument for $S$ we reduced the four dimensional system \eqref{eqn: model1} to a two dimensional system \eqref{eqn: trans_sys}. This simpler system exhibited a competitive exclusion equilibrium where the first strain forces the second strain to extinction and a coexistence equilibrium. Results and open questions are summarized in \autoref{tab: results}.

The model presented in this study uses a time delay to account for cross-immunity between two strains and is shown to be a harmless delay since it doesn't influence the stability of the boundary equilibrium points \citep{Gopalsamy_1983,gopalsamy_1984,Driver_72}. However, the time delay's influence on the stability of the coexistence equilibrium is an open question. This time delay acts as a definitive period for immunity as opposed to a continuous or distributed waning of protection \citep{Pell2022}. For comparison, we simulate the ODE version of the model \eqref{eqn: model1} with the $\beta_2R_lI_2$ terms replaced by $\epsilon\beta_2R_lI_2$ in order to study the effects of waning immunity, as shown in \autoref{fig:cross_imm}. As $\epsilon\to0$ (i.e. the waning period for cross-immunity increases) it is shown that the emergent strain requires more time to be established in the population if all parameters between the two strains are equal. Additionally, we can interpret the term $\varepsilon\beta_2I_2R_l$ as the number of new breakthrough infections that occur per time unit. As $\varepsilon$ increases to 1, the more likely breakthrough infections will occur. Using a similar model that does not account for demography, Boyle et al., showed that the rapid turnover from one variant to another is influenced by two components: the increase in transmissibility and the breakthrough infections \citep{boyle2022selective}. They deduce that emergent strains are the ones that are best at evading immunity \citep{boyle2022selective}. Our simulations in \autoref{fig:cross_imm} further support this.

\begin{figure}[!htbp]
    \centering
    \includegraphics[width=\textwidth]{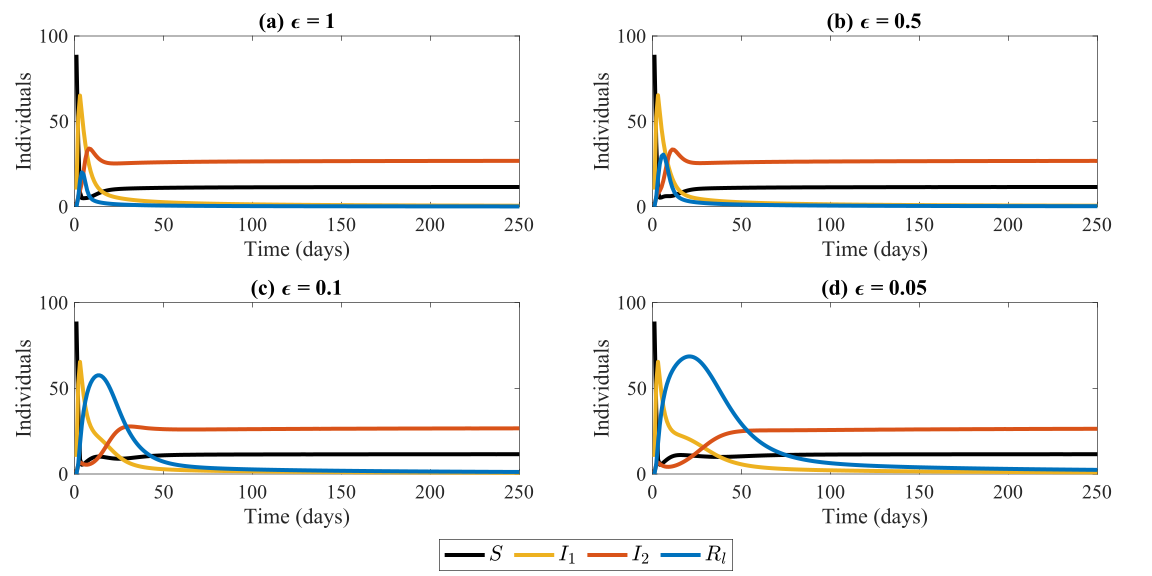}
    \caption{Simulations of the model \eqref{eqn: model1} with $\tau = 0$ where the $\beta_2R_lI_2$ term is replaced by $\epsilon\beta_2R_lI_2$ for the indicated values of $\epsilon$. Parameter values: $\beta_1 = \beta_2 = 0.03$, $\gamma_1 = \gamma_2 = 0.2$, $d_1 = d_2 = 0.15$, $a = 10$, and $d = 0.045$}
    \label{fig:cross_imm}
\end{figure}

We fit the model \eqref{eqn: model1} to wastewater data from the greater Boston area in order to show that the model can capture two-strain dynamics in the real world. Using wastewater data for fitting, as opposed to clinical data, allows us to avoid issues related to under-reporting or reporting lags of case data. We fit the model with and without time delay and found that incorporating time delay allowed the model to better follow the trend of the data for the first wave. However, regardless of the inclusion of time delay, the model did not qualitatively capture the second wave in the data (although the model with time delay performed slightly better). This is due to the models not accounting for the vaccination program that began around December. Although the issue of parameter identifiability is present (and beyond the scope of this study), we ultimately show that this four-dimensional model is able to capture complex two-strain dynamics. A local sensitivity analysis was carried out on the values obtained via curve-fitting and indicated that cumulative infections are sensitive to strain-specific contact and recovery rates. 

This paper may also be viewed as an extension of the work done by \cite{fudolig_2020}. While Fudolig and Howard did consider cross-immunity because they focused on SARS-CoV-2 and influenza strains co-circulating, we incorporated a time delay to account for one SARS-CoV-2 strain providing temporary immunity to another. Although that study included a compartment for vaccinated individuals, more direct comparisons may be made by setting the vaccination rate of their model ($p$) and the time delay of the model presented here ($\tau$) to zero. The authors derived the same local stability results for the disease-free and emergent strain (strain 2) equilibrium, and also found that the reproduction number of the emergent strain (strain 2) must be sufficiently small in order for the local stability of the established strain boundary equilibrium \citep{fudolig_2020}. Furthermore, we provide global stability results for the boundary equilibria. Our bifurcation plane for the full model, shown in \autoref{fig:R1R2 plane}, mirrors that of \cite{fudolig_2020}. In addition, we provide an analogous bifurcation plane for the transient system \eqref{eqn: trans_sys} in \autoref{fig:R02R02 plane}.

 SARS-CoV-2-infected individuals always go through a latent period, where they are yet to be transmissible clinically. This duration is related to the number of infectious viruses (or the within-host viral load) and should not be confused with the sub-clinical symptomatic phase, which can follow the latent period \citep{ke2021vivo, heitzman2022modeling}. Existing models examining SARS-CoV-2 transmission often consider latency, which better integrates epidemic data \citep{phan2023simple, patterson2022does, eikenberry2020mask}. However, for our analytical purposes, the inclusion of latency can complicate the mathematical analysis but usually has a small effect on the basic reproduction number and often does not affect global stability \citep{patterson2022does, van2017reproduction, feng2001role}. A similar simplification to facilitate model analysis was also done by \cite{boyle2022selective}. Thus, we made the simplifying assumption to not include latency in our current model.

 In general, immunity against one strain may not confer protection for a different strain, if the two strains are sufficiently different from one another. However, while the initial infection may be due to a single strain, mutations occur during the course of infection and may allow for the development of antibodies to various mutations of the initial strain, which may include the particular second strain. Yet more paradoxically, antibodies obtained from one strain may enhance the infection of another, which is known as the antibody-dependent enhancement of infection phenomenon \citep{junqueira2022fcgammar, maemura2021antibody, wan2020molecular, nikin2015role}. The evolutionary dynamic of SARS-CoV-2 itself is interesting and quite complex and should vary from individual to individual. Instead, the motivation for our model comes from the scenario when a mutant strain begins to emerge while another strain is dominant, as is the case of Alpha and Delta variants. In particular, taking into account the timing (e.g., the beginning of Delta vs. the end of Alpha) and scale differences in the number of infected individuals from each variant, we assume individuals recovered from Alpha can lose immunity and get infected with Delta during this time period. On the other hand, we assume individuals who recovered from Delta may not get infected with Alpha.
Due to this reason, we chose not to include vaccinations of individuals recovered from strain 2. In particular, if an individual is recovered from the emerging strain, regardless of the particular emerging variant, they may have some protection from the dominant strain. By the time the protection of this individual wanes, there should be much fewer individuals infected by the originally dominant strain to consider reinfection as a viable path of infection. Throughout the course of the SARS-CoV-2 pandemic, we have never observed a strain become dominant for multiple periods. Future work may consider extensions to these aspects of our model.

Ultimately, the model developed here, although simple in appearance, exhibits rich dynamics and, with the inclusion of wastewater-based epidemiology, is capable of capturing interactions of two strains circulating in the community. Future extensions of the model may include more than two strains and use standard incidence. For example, a model with $N$ strains may include $N$ infectious compartments but $N$ or fewer recovered compartments, depending on how cross-immunity is modeled. It may also be desirable to include a mutation factor to study the emergence mechanisms of various strains. Another fruitful direction would be to more realistically model the temporary cross-immunity period using a distributed delay framework.

\section*{Acknowledgements}
This work is supported by Faculty Startup funding from the Center of Infectious Diseases at UTHealth, the UT system Rising STARs award, and the Texas Epidemic Public Health Institute (TEPHI) to F.W. S.B. and Y.K. are partially supported by the US National Science Foundation Rules of Life program DEB-1930728 and the NIH grant 5R01GM131405-02. T.P. is supported by the director's postdoctoral fellowship at Los Alamos National Laboratory. We would like to thank the two anonymous reviewers for taking the time and effort necessary to review the manuscript. We sincerely appreciate all valuable comments and suggestions, which helped us to improve the quality of the manuscript.



\section*{Declarations}
\textbf{Competing Interests} The authors declare they have no competing interests.

\bibliography{sn-bibliography.bib}


\end{document}